

\documentclass[leqno,twoside, 11pt]{amsart}

\usepackage{latexsym,esint}
\usepackage{graphicx}

\setlength{\hoffset}{-2cm}
\setlength{\voffset}{0cm}
\setlength{\textwidth}{16.3cm}
\setlength{\textheight}{22cm}

\theoremstyle{plain}

\def\endproof{\hspace*{\fill}\mbox{\ \rule{.1in}{.1in}}\medskip }

\newtheorem{theorem}{Theorem}[section]
\newtheorem{corollary}[theorem]{Corollary}
\newtheorem{lemma}[theorem]{Lemma}

\theoremstyle{definition}

\newtheorem{remark}[theorem]{Remark}

\numberwithin{equation}{section}
\numberwithin{figure}{section}

\begin{document}

\title[non-Euclidean plates]
{Scaling laws for non-Euclidean plates and the 
    $W^{2,2}$ isometric immersions of Riemannian metrics}
\author{Marta Lewicka and Reza Pakzad}
\address{Marta Lewicka, University of Minnesota, Department of Mathematics, 
206 Church St. S.E., Minneapolis, MN 55455}
\address{Reza Pakzad, University of Pittsburgh, 
Department of Mathematics, 139 University Place, Pittsburgh, PA 15260}
\email{lewicka@math.umn.edu, pakzad@pitt.edu}
\subjclass{74K20, 74B20}
\keywords{non-Euclidean plates, nonlinear elasticity, 
Gamma convergence, calculus of variations}
 
\date{\today}
\begin{abstract} 
This paper concerns the elastic structures which exhibit non-zero strain at
free equilibria. Many growing tissues (leaves, 
flowers or marine invertebrates) attain complicated configurations 
during their free growth. Our study departs from 
the $3$d incompatible elasticity theory, conjectured to explain 
the mechanism for the spontaneous formation of non-Euclidean metrics.

Recall that a smooth Riemannian metric on a simply connected domain can
be realized as the pull-back metric of an orientation preserving deformation if
and only if the associated Riemann curvature tensor vanishes identically.
When this condition fails, one seeks a deformation yielding 
the closest metric realization.  
We set up a variational formulation of this problem by
introducing the non-Euclidean version of the nonlinear 
elasticity functional, and establish its $\Gamma$-convergence under the proper
scaling.  As a corollary, we obtain new necessary and sufficient conditions 
for existence of a $W^{2,2}$ isometric immersion of a given $2$d metric
into $\mathbb R^3$.
\end{abstract}

\maketitle
\tableofcontents

\section{Introduction}

Recently, there has been a growing interest in the study of flat thin sheets
which assume non-trivial configuration in the absence of exterior forces 
or impose boundary conditions. 
This phenomenon has been observed in different contexts: growing leaves,
torn plastic sheets and specifically engineered polymer gels \cite{klein}.
The study of wavy patterns along the edges of a torn plastic sheet or the
ruffled edges of leaves suggest that the sheet endeavors to reach
a non-attainable equilibrium and hence necessarily assumes 
a non-zero stress rest configuration.

In this paper, we attempt to give a possible mathematical foundation of these
phenomena, in the context of the nonlinear theory of elasticity.
The basic model, called ``three dimensional incompatible elasticity'' 
\cite{kupferman}, follows the findings of an experiment described 
in \cite{klein}.
\begin{figure}
\centering
\includegraphics[width=10cm]{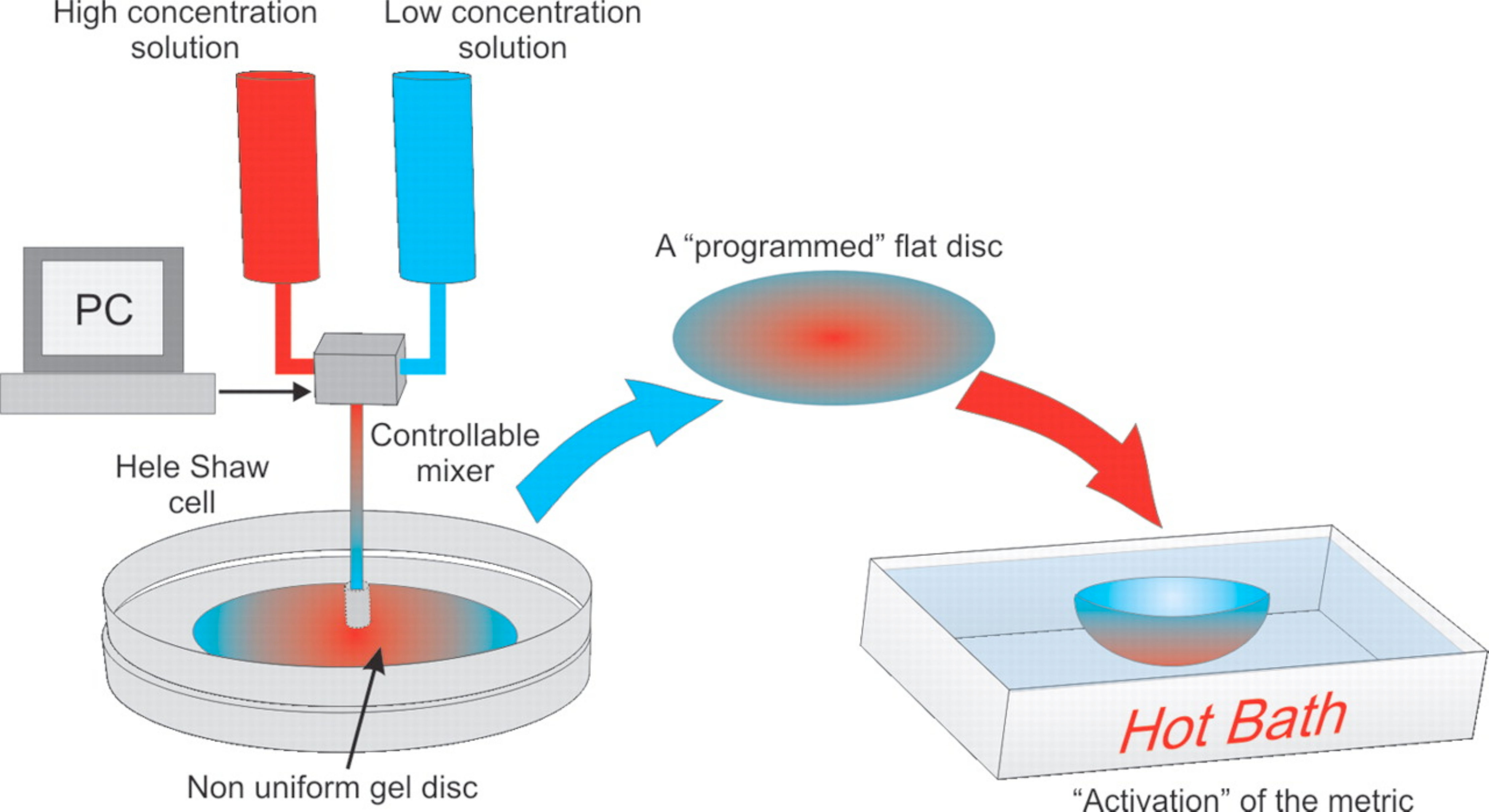}
\hspace{0.8cm}
\includegraphics[width=4.8cm]{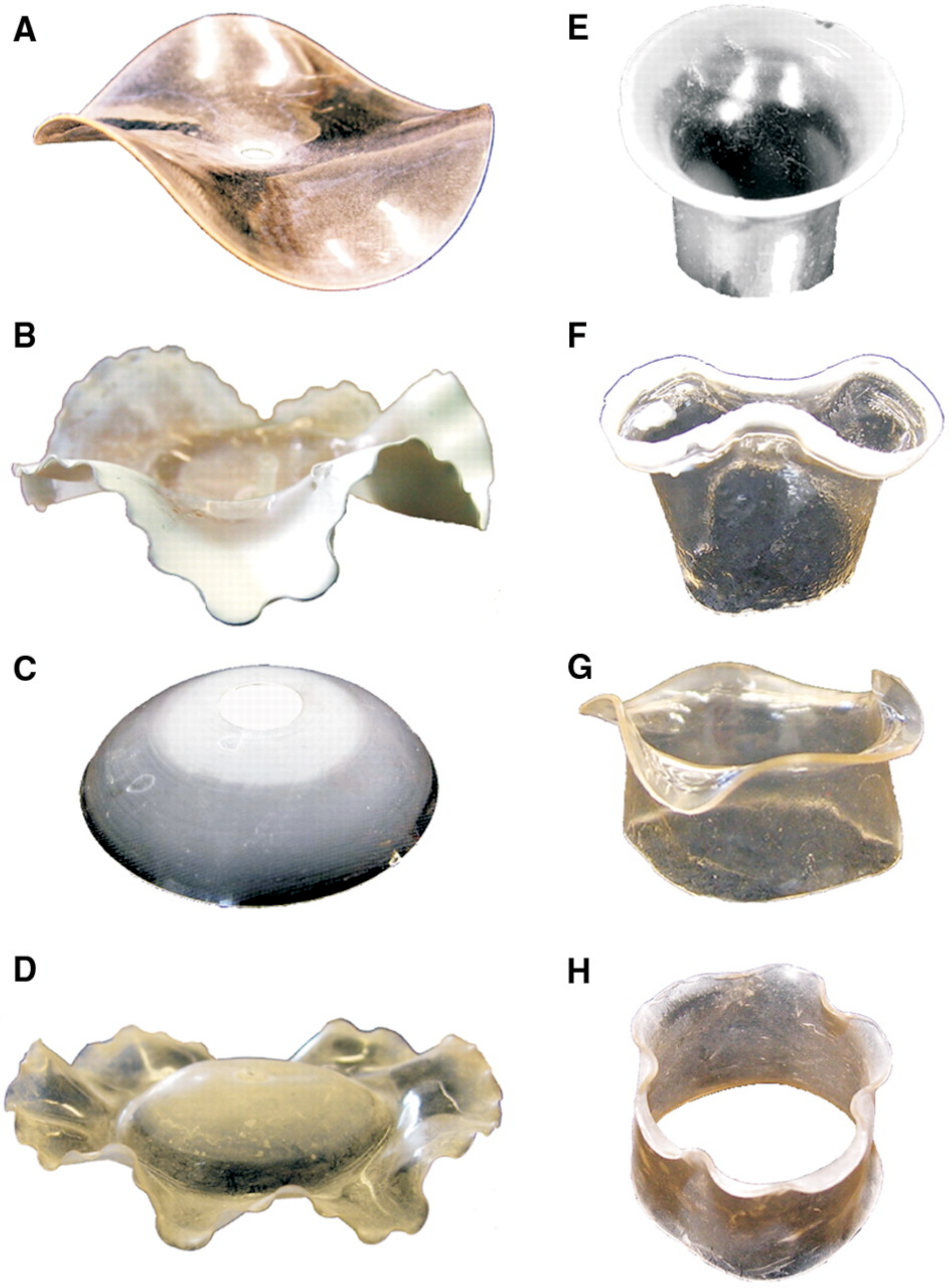}
\caption{The experimental system and the obtained structures of sheets 
with radially symmetric target metrics.
Reprinted from \cite{klein} with permission from AAAS.}
\label{figura1}
\end{figure}
The experiment (see Figure \ref{figura1}) 
consists in fabricating programmed flat disks of gels having
a non-constant monomer concentration which induces a 
``differential shrinking factor''.  
The disk is then activated in a temperature raised above
a critical threshold, whereas the gel shrinks with a factor proportional 
to its concentration and the distances between the points on the surface 
are changed. This defines a new target metric on the disk, inducing hence 
a $3$d configuration in the initially 
planar plate. One of the most remarkable features of this deformation 
is the onset of some ``transversal'' oscillations (wavy patterns). 

Trying to understand the above phenomena in the context of nonlinear 
elasticity theory, it has been postulated \cite{klein, kupferman} that
the $3$d elastic body seeks to realize a configuration with 
a prescribed pull-back metric $g$.
In this line, a $3$d energy functional was introduced in \cite{kupferman},
measuring the $L^2$ distance of the realized pull-back metric of
the given deformation from $g$. 

Unfortunately, this functional is not suitable from a variational point
of view, due to the existence of Lipschitz deformations with zero energy level
\cite{gromov}. These deformations are however neither
orientation preserving nor reversing in any neighborhood of a point
where the Riemann curvature of the metric $g$ does not vanish
(i.e. when the metric is non-Euclidean). 
In order to overcome this shortcoming,
here we introduce a modified energy $I(u)$ which measures,
in an average pointwise manner, how far a given deformation $u$ is from being
an orientation preserving realization
of the prescribed metric. An immediate consequence is that for
non-Euclidean $g$, the infimum of $I$ (in absence of any forces or boundary
conditions) is strictly positive, which points to the existence of
non-zero strain at free equilibria.

Several interesting questions arise in the study of the proposed energy
functional. A first one is to determine the scaling of the infimum energy
in terms of the vanishing thickness of a sheet.
Another is to determine limiting zero-thickness
theories under obtained scaling laws. The natural
analytical tool in this regard is that of $\Gamma$ convergence,
in the context of Calculus of Variations.

In this paper, we consider a first case where the prescribed metric  
is given by a tangential Riemannian metric $[g_{\alpha\beta}]$
on the $2$d mid-plate, and is independent of the thickness variable.
The $3$d metric $g$ is set-up such that no stretching may happen
in the direction normal to the sheet in order to realize the metric.
Consequently, if $[g_{\alpha\beta}]$
has non-zero Gaussian curvature, then such $g$ is non-Euclidean.
We further observe a correspondence between
the scaling law for the infimum energy of the thin sheet in terms
of the thickness, and the immersability of  $[g_{\alpha\beta}]$ into $\mathbb R^3$
(Theorem \ref{isoim}).
This result relates to a  longstanding problem in differential geometry, 
depending heavily on the regularity of the immersion \cite{kuiper, HH}.
In our context, we deal with $W^{2,2}$ immersions not studied previously.
We also derive the $\Gamma$-limit of the rescaled energies,
expressed by a curvature functional
on the space of all $W^{2,2}$ realizations of $[g_{\alpha\beta}]$ in 
$\mathbb R^3$ (Theorems \ref{liminf} and \ref{threcseq}).

To put our results in another context, recall the seminal work of Friesecke,
James and M\"uller \cite{FJMgeo}, where the nonlinear bending theory
of plates (due to Kirchhoff) was derived as the $\Gamma$-limit of
the classic theory of nonlinear elasticity, under the assumption that
the later energy per unit thickness $h$ scales like $h^2$.
From a mathematical point of view, the present paper provides the non-Euclidean
version of the same results under the same scaling law, and
the $2$d limit theory we obtain is the natural
non-Euclidean generalization of the Kirchhoff model.
Contrary to the classic case, in our context the scaling law is the unique 
natural scaling of the energy for
the free thin sheet with the associated prescribed metric.

As a major ingredient of proofs, we also give a generalization of the geometric 
rigidity estimate \cite{FJMgeo} to the non-Euclidean setting (Theorem
\ref{rigidity_main}). We estimate the average $L^2$ oscillations 
of the deformation gradient from a fixed matrix in terms of the $3$d 
non-Euclidean energy $I$ and certain geometric parameters of the $3$d domain.
The main difference is an extra term of the bound, depending on
the derivatives of the prescribed metric $g$ and hence vanishing when $g$
is Euclidean as in \cite{FJMgeo}.

\bigskip

\noindent{\bf Acknowledgments.}
We are grateful to Stefan M\"uller for a significant shortening of 
the original proof of Theorem \ref{rigidity_main}.
The subject of non-Euclidean plates has been brought to our attention
by Raz Kupferman.
M.L. was partially supported by the NSF grant DMS-0707275
and by the Center for Nonlinear Analysis (CNA) under
the NSF grants 0405343 and 0635983.
R.P. was partially supported by the University of Pittsburgh grant
CRDF-9003034.

\section{Overview of the main results}

Consider an open, bounded, Lipschitz domain
$\mathcal{U}\subset \mathbb{R}^n$, with a given smooth Riemannian 
metric $g = [g_{ij}]$.
The matrix field $g:\bar{\mathcal{U}}\longrightarrow\mathbb{R}^{n\times n}$
is therefore symmetric and strictly positive definite up to the boundary
$\partial\mathcal{U}$.
Let $A=\sqrt{g}$ be the unique symmetric positive definite square root of $g$
and define, for all $x\in\bar{\mathcal U}$:
\begin{equation}\label{setF}
\mathcal{F}(x) = \Big\{RA(x); ~ R\in SO(n)\Big\},
\end{equation}
where $SO(n)$ stands for the special orthogonal group of rotations
in $\mathbb{R}^n$. By polar decomposition theorem, it easily follows that
$u$ is an orientation preserving realization of $g$:
$$(\nabla u)^T\nabla u = g \quad \mbox{ and } \quad \det \nabla u>0
\qquad \mbox{ a.e. in } \mathcal{U}$$
if and only if:
$$\nabla u(x)\in\mathcal{F}(x) \qquad \mbox{a.e. in } \mathcal{U}.$$
Motivated by this observation, we define:
\begin{equation}\label{Icorrect}
I(u)= \int_{\mathcal{U}} \mbox{dist}^2(\nabla u(x), \mathcal{F}(x))~\mbox{d}x
\qquad \forall u\in W^{1,2}(\mathcal{U},\mathbb{R}^n).
\end{equation}
Notice that when $g=\mathrm{Id}$ then the above functional becomes
$I(u) = \int \mbox{dist}^2(\nabla u, SO(n))$ which is a standard quadratic
nonlinear elasticity energy, obeying the frame invariance.

\begin{remark}
For a deformation $u:\mathcal{U}\longrightarrow \mathbb{R}^n$
one could define the energy as the difference between
its pull-back metric on $\mathcal{U}$ and $g$:
\begin{equation*}
I_{str}(u)= \int_{\mathcal{U}} |(\nabla u)^T\nabla u - g|^2 ~\mbox{d}x.
\end{equation*}
However, such ``stretching'' functional is not appropriate
from the variational point of view, for the following reason.
It is known that there always exists
$u\in W^{1,\infty}(\mathcal{U},\mathbb{R}^n)$ such that $I_{str}(u) = 0$.
On the other hand \cite{gromov}, if the Riemann curvature tensor $R$ associated
to $g$ does not vanish identically, say $R_{ijkl}(x)\neq 0$ for some
$x\in\mathcal{U}$, then
$u$ must have a 'folding structure' around $x$; the realization $u$ cannot
be orientation preserving (or reversing) in any open neighborhood of $x$.
\end{remark}

In view of the above remark, our first observation concerns the energy
(\ref{Icorrect}) in case of $R\not\equiv 0$.

\begin{theorem}\label{thbound_below}
If the Riemann curvature tensor $R_{ijkl} \not\equiv 0$, then
$\inf\Big\{I(u); ~u\in W^{1,2}(\mathcal{U},\mathbb{R}^n)\Big\}>0$.
\end{theorem}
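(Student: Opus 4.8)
The plan is to argue by contradiction. Assume that $\inf\{I(u);\ u\in W^{1,2}(\mathcal U,\mathbb R^n)\}=0$ and fix a minimizing sequence $u_k\in W^{1,2}(\mathcal U,\mathbb R^n)$ with $I(u_k)\to0$. Since $A=\sqrt g$ is bounded on the compact set $\bar{\mathcal U}$ and $SO(n)$ is compact, the sets $\mathcal F(x)$ are uniformly bounded, so $\mathrm{dist}^2(0,\mathcal F(x))\le C$ and $\|\nabla u_k\|_{L^2}^2\le 2I(u_k)+2C|\mathcal U|$ stays bounded. Write $\nabla u_k=F_k+E_k$, where $F_k(x)\in\mathcal F(x)$ realizes $\mathrm{dist}(\nabla u_k(x),\mathcal F(x))$; then $\|F_k\|_{L^\infty}\le C$ and $\|E_k\|_{L^2}\to0$. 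Replacing $u_k$ by $u_k-\frac{1}{|\mathcal U|}\int_{\mathcal U}u_k$, which leaves $I(u_k)$ unchanged, and using Poincar\'e's inequality, we may assume after passing to a subsequence that $u_k\rightharpoonup u$ weakly in $W^{1,2}(\mathcal U,\mathbb R^n)$.

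The crucial step is to pass the approximate differential inclusion to the limit: I claim that $\nabla u(x)\in\mathcal F(x)$ for a.e.\ $x\in\mathcal U$, i.e.\ $(\nabla u)^T\nabla u=g$ and $\det\nabla u>0$ a.e., and that in fact $\nabla u_k\to\nabla u$ strongly in $L^2$. This is the point at which the rigidity of the incompatible set $SO(n)$ is needed. Let $\nu=\{\nu_x\}_{x\in\mathcal U}$ be the Young measure generated by a subsequence of $\{\nabla u_k\}$; since the $\nabla u_k$ are $L^2$-bounded gradients and $\nu$ is supported in the bounded set $\bigcup_x\mathcal F(x)$, a truncation argument lets us assume the generating sequence is equibounded, so $\nu$ is a $W^{1,\infty}$-gradient Young measure. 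As $\mathrm{dist}(\nabla u_k(\cdot),\mathcal F(\cdot))\to0$ in $L^2$, lower semicontinuity of the Young measure representation gives $\int_{\mathcal U}\int\mathrm{dist}^2(F,\mathcal F(x))\,d\nu_x(F)\,dx=0$, whence $\mathrm{supp}\,\nu_x\subseteq\mathcal F(x)=SO(n)A(x)$ for a.e.\ $x$. By the localization principle, for a.e.\ $x_0$ the measure $\nu_{x_0}$ is a homogeneous gradient Young measure supported in $SO(n)A(x_0)$; right-multiplication of its generating gradients by the fixed invertible matrix $A(x_0)^{-1}$ (which corresponds to the linear change of variables $x\mapsto A(x_0)^{-1}x$ and so preserves the gradient structure) turns $\nu_{x_0}$ into a gradient Young measure supported in $SO(n)$, which by the classical rigidity of $SO(n)$ (cf.\ \cite{FJMgeo}) is a Dirac mass. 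Hence $\nu_x=\delta_{\nabla u(x)}$ a.e., which forces $F_k\to\nabla u$ in measure and therefore, being equibounded, in $L^2$; consequently $\nabla u_k\to\nabla u$ in $L^2$ and $\nabla u(x)=\lim_kF_k(x)\in\mathcal F(x)$ a.e. In particular $\nabla u\in L^\infty$, so $u\in W^{1,\infty}(\mathcal U,\mathbb R^n)$. (Alternatively, the same conclusion can be extracted from the quantitative estimate of Theorem \ref{rigidity_main} applied on small balls, with the metric frozen at the center.)

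It remains to bootstrap the regularity of $u$ and conclude. Write $\nabla u=RA$ with $R:\mathcal U\to SO(n)$ bounded and measurable. Since mixed distributional derivatives commute, $\partial_\gamma(R_{ik}A_{k\beta})=\partial_\beta(R_{ik}A_{k\gamma})$, i.e.\ $(\partial_\gamma R_{ik})A_{k\beta}-(\partial_\beta R_{ik})A_{k\gamma}=R_{ik}(\partial_\beta A_{k\gamma}-\partial_\gamma A_{k\beta})$ in the sense of distributions. Because $A$ is smooth and invertible and $R$ is $SO(n)$-valued, this overdetermined first-order system can be solved for the full gradient of $R$, so that $R$ — a priori only bounded — is a distributional solution of a linear system $\nabla R=\mathcal A(\cdot)R$ with a smooth coefficient field $\mathcal A$ (the integrability conditions holding automatically, since the solution $R$ exists); a standard bootstrap then gives $R\in C^\infty$, hence $u\in C^\infty(\mathcal U,\mathbb R^n)$. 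But then $u$ is a smooth local diffeomorphism with $(\nabla u)^T\nabla u=g$, so the Riemann curvature tensor of $g$ equals the pull-back under $u$ of the (vanishing) curvature of the Euclidean metric, i.e.\ $R_{ijkl}\equiv0$, contradicting the hypothesis. Thus $\inf I>0$.

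The step I expect to be the main obstacle is the second one. Because $\mathcal F(x)$ is non-convex — and varies with $x$ — the property that $\nabla u_k$ is approximately $\mathcal F(x)$-valued is not stable under weak $W^{1,2}$-convergence, so no soft compactness argument yields $\nabla u(x)\in\mathcal F(x)$; one genuinely has to invoke the rigidity of rotations (heuristically, the absence of rank-one connections in $SO(n)$) to upgrade weak to strong convergence of the gradients. The regularity bootstrap of the third step, by contrast, is routine and could be replaced by an appeal to the known regularity theory for orientation-preserving weak solutions of $(\nabla u)^T\nabla u=g$ with smooth $g$.
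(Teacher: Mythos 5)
Your overall strategy (contradiction, truncation, weak compactness, pass the inclusion $\nabla u \in \mathcal F(x)$ to the limit, conclude via smoothness of isometric realizations) matches the paper's, but your two key steps are genuinely different from the paper's and one of them has a gap.

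For the passage to the limit, the paper does not use Young measures. It instead introduces the cofactor observation that $F\in\mathcal F(x)$ implies $\mathrm{cof}\,F=\sqrt{|g|}\,F g^{-1}$, so $\sqrt{|g|}(\nabla u)g^{-1}-\mathrm{cof}\,\nabla u$ is controlled by $I(u)^{1/2}$; combined with $\mathrm{div}(\mathrm{cof}\,\nabla u)=0$, this shows each $u^m$ is $L^2$-close to a $\Delta_g$-harmonic function, and the elliptic estimates for $\Delta_g$ give local strong $W^{1,2}$ compactness. Your Young measure argument (localize $\nu_{x_0}$, push forward by right-multiplication with $A(x_0)^{-1}$, invoke the rigidity that a homogeneous $W^{1,\infty}$-gradient Young measure supported in $SO(n)$ is a Dirac mass) is a legitimate alternative route to the same conclusion; it leans on the Kinderlehrer--Pedregal machinery and the classical rigidity of $SO(n)$ rather than on the PDE structure. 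Both give $\nabla u\in\mathcal F(x)$ a.e.\ with $u\in W^{1,\infty}$. An attractive feature of the paper's route is that the very same cofactor identity immediately gives $\Delta_g u^m=0$ once $\nabla u\in\mathcal F(x)$ a.e., so that smoothness of $u$ (and hence $R\equiv 0$) drops out of the same computation.

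There is, however, a genuine gap in your final regularity bootstrap. You write $\nabla u=RA$ and assert that the integrability conditions $\partial_\gamma\partial_\beta u=\partial_\beta\partial_\gamma u$ together with the constraint $R\in SO(n)$ let you solve for the full gradient of $R$, producing a distributional linear system $\nabla R=\mathcal A(\cdot)R$ with smooth $\mathcal A$. But the integrability conditions alone give only the ``antisymmetrized'' combinations $(\partial_\gamma R_{ik})A_{k\beta}-(\partial_\beta R_{ik})A_{k\gamma}$ of the unknown distributions $\partial_\gamma R_{ik}$; this is an underdetermined system. To pin down $\nabla R$ completely one also needs $R^{T}\partial_\gamma R\in\mathrm{skew}$, which is what one obtains by differentiating $R^{T}R=\mathrm{Id}$ --- and that differentiation is not legitimate for an a priori merely bounded measurable $R$, since products of distributions with $L^\infty$ functions obey no product rule. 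In other words, the equation $\nabla R=\mathcal A R$ is consistent with what you know, but is not derivable from it; were this bootstrap valid as stated, it would reprove Reshetnyak--Liouville rigidity ($\nabla u\in SO(n)$ a.e.\ $\Rightarrow u$ affine) by a purely formal manipulation, which is not the case. You do note that this step ``could be replaced by an appeal to the known regularity theory for orientation-preserving weak solutions of $(\nabla u)^T\nabla u=g$'', and indeed that is exactly what the paper's Lemma \ref{thsmoothandRicci} supplies, via the cofactor/Laplace--Beltrami computation which applies already to $u\in W^{1,\infty}$; invoking that result (or any equivalent version of Reshetnyak's stability theorem) closes the gap. As written, though, the ``standard bootstrap'' you propose does not go through.
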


In case $g= \mbox{Id}$, the infimum as above is naturally $0$ and is attained only by
the rigid motions. In \cite{FJMgeo}, the authors
proved an optimal estimate of the deviation  in $W^{1,2}$ of a deformation $u$
from rigid motions in terms of $I(u)$. In section \ref{secrigidity} 
we give a generalization of such quantitative
rigidity estimate to our setting. Since there is no realization of $I(u) =0$ 
if the Riemann curvature is non-zero, we choose to estimate the
deviation of the deformation from a linear map at the expense of an
extra term, proportional to the gradient of $g$.

\begin{theorem}\label{rigidity_main}
For every $u\in W^{1,2}(\mathcal{U},\mathbb{R}^n)$ there exists
$Q\in\mathbb{R}^{n\times n}$ such that:
$$\int_{\mathcal{U}}|\nabla u(x) - Q|^2~\mathrm{d}x\leq
C \left(\int_{\mathcal{U}} \mathrm{dist}^2(\nabla u,
\mathcal{F}(x))~\mathrm{d}x
+ \|\nabla g\|_{L^\infty}^2 (\mathrm{diam } ~\mathcal{U})^2 
|\mathcal{U}|\right),$$
where the constant $C$ depends on $\|g\|_{L^\infty}$, $\|g^{-1}\|_{L^\infty}$,
and on the domain $\mathcal{U}$. The dependence on $\mathcal{U}$ is uniform for a family
of domains which are bilipschitz equivalent with controlled Lipschitz constants.
\end{theorem}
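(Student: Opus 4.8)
The plan is to reduce the estimate, by a finite covering argument, to an affinely rescaled form of the classical geometric rigidity estimate of Friesecke, James and M\"uller \cite{FJMgeo}, applied on small pieces on which the metric is effectively constant.

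\emph{Step 1: a localized estimate.} Fix a cube $Q\subset\mathcal U$ with center $x_0$ and set $A_0=A(x_0)$. For $x\in Q$ write $\nabla u(x)=RA(x)+E$ with $R\in SO(n)$ and $|E|=\mathrm{dist}(\nabla u(x),\mathcal F(x))$; then
\[
\nabla u(x)A_0^{-1}=R+R\big(A(x)-A_0\big)A_0^{-1}+EA_0^{-1},
\]
so, using $\|A(x)-A_0\|\le\|\nabla A\|_{L^\infty}|x-x_0|$,
\[
\mathrm{dist}\big(\nabla u(x)A_0^{-1},SO(n)\big)\le \|A_0^{-1}\|\,\mathrm{dist}(\nabla u(x),\mathcal F(x))+C\|\nabla A\|_{L^\infty}\,\mathrm{diam}\,Q.
\]
The key point is that, although $\nabla u\,A_0^{-1}$ is not a gradient, it coincides at the point $A_0x$ with $\nabla v$, where $v(y):=u(A_0^{-1}y)$ is defined on the parallelepiped $A_0Q$ --- this is where having the constant matrix $A_0$ multiply on the right, together with a change of variables in the domain, is exploited. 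Since $A_0Q$ is bilipschitz equivalent to a cube with constants controlled by $\|g\|_{L^\infty}^{1/2}$ and $\|g^{-1}\|_{L^\infty}^{1/2}$, the scale-invariant rigidity estimate of \cite{FJMgeo} furnishes $R_Q\in SO(n)$ with $\int_{A_0Q}|\nabla v-R_Q|^2\le C\int_{A_0Q}\mathrm{dist}^2(\nabla v,SO(n))$. Undoing the change of variables (the Jacobian $|\det A_0|$ cancels) and multiplying by $A_0$ on the right gives
\[
\int_Q|\nabla u-R_QA_0|^2\le C\Big(\int_Q\mathrm{dist}^2(\nabla u,\mathcal F(x))\,\mathrm dx+\|\nabla g\|_{L^\infty}^2(\mathrm{diam}\,Q)^2|Q|\Big),
\]
with $C=C(n,\|g\|_{L^\infty},\|g^{-1}\|_{L^\infty})$; here we used $\|\nabla A\|_{L^\infty}\le C(\|g^{-1}\|_{L^\infty})\|\nabla g\|_{L^\infty}$, valid since $A=\sqrt g$ depends smoothly on $g$ on the relevant compact set of positive definite matrices.

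\emph{Step 2: globalization.} Cover $\mathcal U$ by finitely many cubes $Q_1,\dots,Q_N$ with uniformly bounded overlap whose nerve is connected, with consecutive cubes meeting in a set of measure bounded below; for a bounded connected Lipschitz domain this can be arranged with $N$ and all geometric constants uniform over a bilipschitz-equivalent family. On each $Q_k$, Step 1 produces a constant matrix $Q^{(k)}:=R_{Q_k}A(x_k)$. On the overlap of two consecutive cubes both $Q^{(k)}$ and $Q^{(\ell)}$ are $L^2$-close to $\nabla u$, hence $|Q^{(k)}-Q^{(\ell)}|$ is bounded by the square root of the corresponding local right-hand sides; chaining along the connected nerve, each $Q^{(k)}$ differs from $Q:=Q^{(1)}$ by at most $C(N)$ times that quantity. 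Summing $\int_{Q_k}|\nabla u-Q|^2\le2\int_{Q_k}|\nabla u-Q^{(k)}|^2+2|Q^{(k)}-Q|^2|Q_k|$ over $k$, and using bounded overlap together with $\mathrm{diam}\,Q_k\le\mathrm{diam}\,\mathcal U$, yields the assertion.

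The main obstacle is conceptual rather than computational: one cannot globally straighten the metric --- doing so would amount to a global isometric immersion of $g$, which is obstructed precisely when the Riemann curvature does not vanish --- so the oscillation term $\|\nabla g\|_{L^\infty}^2(\mathrm{diam}\,\mathcal U)^2|\mathcal U|$ is intrinsic, and the only available reduction to \cite{FJMgeo} is the local affine one of Step 1. The remaining care lies in (i) the right-multiplication / change-of-variables identity that converts $\mathrm{dist}(\cdot,SO(n)A_0)$ into a genuine rigidity problem, and (ii) the bookkeeping in the covering-and-chaining argument ensuring that the final constant depends on $\mathcal U$ only through its bilipschitz class.
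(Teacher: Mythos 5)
Your Step~1 contains the essential idea, and it is in fact the entirety of the paper's argument: fix $x_0\in\mathcal U$, set $A_0=A(x_0)$, pass to $v(y)=u(A_0^{-1}y)$ on $A_0\mathcal U$, invoke the Friesecke--James--M\"uller rigidity estimate there (noting that $A_0\mathcal U$ lies in a controlled bilipschitz class since $\|A_0\|,\|A_0^{-1}\|$ are controlled by $\|g\|_{L^\infty},\|g^{-1}\|_{L^\infty}$), change variables back and estimate $\mathrm{dist}(\nabla u\,A_0^{-1},SO(n))\le C\,\mathrm{dist}(\nabla u,\mathcal F(x))+C\|\nabla A\|_{L^\infty}|x-x_0|$. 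The paper does exactly this with $Q=\mathcal U$, so that $|x-x_0|\le\mathrm{diam}\,\mathcal U$, and the claimed bound follows at once.

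Your Step~2 is therefore unnecessary, and the rationale you give for it rests on a misconception. The right-multiplication $\nabla u\mapsto\nabla u\,A_0^{-1}$ followed by the linear change of variables $y\mapsto A_0^{-1}y$ is not a metric-straightening coordinate change and does not pretend to produce an isometric immersion of $(\mathcal U,g)$; it merely transports the \emph{single} well $\mathcal F(x_0)=SO(n)A_0$ onto $SO(n)$, while the other wells $\mathcal F(x)$, $x\ne x_0$, are moved to within $C\|\nabla g\|_{L^\infty}|x-x_0|$ of $SO(n)$. There is no curvature obstruction anywhere in this step, and nothing forces it to be local: the error $|A(x)-A_0|\le C\|\nabla g\|_{L^\infty}\,\mathrm{diam}\,\mathcal U$ is already uniformly controlled over all of $\mathcal U$ and is precisely the oscillation term in the statement. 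The covering-and-chaining machinery would, with some care about the number of cubes $N$, the overlap measures, and their uniformity over a bilipschitz class, also yield the theorem, but only with a strictly worse constant than the one-line global application, and it adds bookkeeping (the dependence of $N$ and the overlap geometry on the bilipschitz family) that the direct argument avoids entirely. In short: keep Step~1, apply it with $Q=\mathcal U$, and delete Step~2.
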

 
We shall consider a class of more general $3$d non-Euclidean elasticity
functionals:
\begin{equation*}
I_W(u) =  \int_{\mathcal{U}} W(x,\nabla u(x))~\mbox{d}x,
\end{equation*}
where the inhomogeneous stored energy density
$W:\mathcal{U}\times \mathbb{R}^{n\times n}\longrightarrow \mathbb{R}_+$
satisfies the following assumptions of frame
invariance, normalization, growth and regularity:
\begin{itemize}
\item[(i)] $W(x, RF) = W(x,F)$ for all $R\in SO(n)$,
\item[(ii)] $W(x, A(x)) = 0$,
\item[(iii)] $W(x,F) \geq c \mbox{ dist}^2(F,\mathcal{F}(x))$, with some
uniform constant $c>0$,
\item[(iv)] $W$ has regularity $\mathcal{C}^2$ in some neighborhood of the set
$\{(x,F); ~ x\in\mathcal{U}, F\in \mathcal{F}(x)\}$.
\end{itemize}
The properties (i) -- (iii) are assumed to hold for all $x\in \mathcal{U}$
and all $F\in \mathbb{R}^{n\times n}$. In case the Riemann curvature tensor of
$g$ does not vanish, by Theorem \ref{thbound_below} the infimum of $I_W$
is positive, in which case $I_W$ is called a three dimensional
incompatible elasticity functional.


\medskip

We consider thin $3$d plates of the form:
$$\Omega^h = \Omega\times (-h/2, h/2)\subset \mathbb{R}^3, \qquad 0<h< <1,$$
with a given mid-plate  $\Omega$ an open bounded subset of $\mathbb{R}^2$.
In accordance with \cite{klein}, we assume that the metric $g$ on
$\Omega^h$ has the form:
\begin{equation}\label{metryka}
g(x', x_3) = \left[\begin{array}{cc} \Big[g_{\alpha\beta}(x')\Big] &
\begin{array}{c} 0 \\ 0\end{array}\\
\begin{array}{cc} 0 & 0 \end{array} & 1
\end{array}\right] \qquad \forall x'\in\Omega, \quad x_3\in (-h/2, h/2),
\end{equation}
where $[g_{\alpha\beta}]$ is a smooth metric on $\Omega$, defined up to
the boundary.  In particular, $g$ does not depend on the thin variable $x_3$.
Accordingly, we shall assume that the energy density $W$ does not depend on $x_3$:
\begin{itemize}
\item[(v)] $W(x,F) = W(x',F)$, for all $x\in\mathcal{U}$ and 
all $F\in\mathbb{R}^{n\times n}$.
\end{itemize}

\medskip

Define now the rescaled energy functionals:
$$I^h(u) = \frac{1}{h}\int_{\Omega^h} W(x, \nabla u(x))~\mbox{d}x 
\qquad \forall u\in W^{1,2}(\Omega^h,\mathbb{R}^3),$$
where the energy well $\mathcal{F}(x) = \mathcal{F}(x') = SO(3) A(x)$
is given through the unique positive definite square root $A=\sqrt{g}$
of the form:
\begin{equation*}
A(x', x_3) = \left[\begin{array}{cc} \Big[A_{\alpha\beta}(x')\Big] &
\begin{array}{c} 0 \\ 0\end{array}\\
\begin{array}{cc} 0 & 0 \end{array} & 1
\end{array}\right] \qquad \forall x'\in\Omega, \quad x_3\in (-h/2, h/2).
\end{equation*}
By an easy direct calculation, one notices that the Riemann curvature tensor
$R_{ijkl} \equiv 0$, of $g$ in $\Omega^h$ if and only if the Gaussian curvature
of the $2$d metric $\kappa_{[g_{\alpha\beta}]}\equiv 0$. 
Hence, by Theorem \ref{thbound_below}, 
$\inf I^h>0$ for all $h$ if this condition is violated.
A natural question is now to investigate the behavior of the sequence  $\inf
I^h$ as $h\to 0$.
We first obtain (in section \ref{secliminf}) the following lower bound
and compactness result:

\begin{theorem}\label{liminf}
Assume that a given sequence of deformations 
$u^h\in W^{1,2}(\Omega^h,\mathbb{R}^3)$ satisfies:
\begin{equation}\label{h2}
I^h(u^h) \leq Ch^2,
\end{equation}
where $C>0$ is a uniform constant. Then, for some sequence of constants
$c^h\in \mathbb{R}^3$, the following holds for the renormalized
deformations $y^h(x', x_3) = u^h(x',hx_3) - c^h\in
W^{1,2}(\Omega^1,\mathbb{R}^3)$:
\begin{itemize}
\item[(i)] $y^h$ converge, up to a subsequence, in $W^{1,2}(\Omega^1,\mathbb{R}^3)$ to
$y(x',x_3) = y(x')$ and $y\in W^{2,2}(\Omega,  \mathbb{R}^3)$.
\item[(ii)] The matrix field $Q(x')$ with columns
$Q(x') = \Big[\partial_1 y(x'), \partial_2 y(x'), \vec n(x')\Big]\in
\mathcal{F}(x')$, for a.e. $x'\in \Omega$.
Here:
\begin{equation}\label{normal}
\vec n = \frac{\partial_1 y\times\partial_2 y}
{|\partial_1 y\times\partial_2y|}
\end{equation}
is the (well defined) normal to the image surface $y(\Omega)$.
Consequently, $y$ realizes the mid-plate
metric: $(\nabla y)^T\nabla y = [g_{\alpha\beta}]$.
\item[(iii)] Define the following quadratic forms:
$$\mathcal{Q}_3(x')(F) = \nabla^2 W (x', \cdot)_{\mid A(x')}(F,F),
\quad \mathcal{Q}_2(x')(F_{tan})
= \min\{\mathcal{Q}_3(x')(\tilde F); ~~ \tilde F_{tan}
= F_{tan} \}.$$
Then we have the lower bound:
$$\liminf_{h\to 0} \frac{1}{h^2} I^h(u^h)\geq  \frac{1}{24}\int_\Omega
\mathcal{Q}_2(x') \Big(A_{\alpha\beta}^{-1} (\nabla y)^T
\nabla\vec n\Big)~\mathrm{d}x'.$$
\end{itemize}
\end{theorem}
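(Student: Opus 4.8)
The plan is to adapt the approach of Friesecke, James and M\"uller \cite{FJMgeo} for the Kirchhoff ($h^2$) scaling, replacing the classical geometric rigidity estimate by its non-Euclidean counterpart, Theorem \ref{rigidity_main}. Throughout, work on the fixed domain $\Omega^1=\Omega\times(-1/2,1/2)$ and set $\nabla_h y^h=\big(\partial_1 y^h,\partial_2 y^h,\tfrac1h\partial_3 y^h\big)$, so that $I^h(u^h)=\int_{\Omega^1}W(x',\nabla_h y^h)\,\mathrm{d}x$ by (v).

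\emph{Step 1 (compactness, (i)--(ii)).} Cover $\Omega$, up to a boundary layer, by essentially disjoint cubes $Q'$ of side $h$, and on each $Q'\times(-h/2,h/2)$ apply Theorem \ref{rigidity_main} after rescaling to the unit cube, where the constant becomes uniform in $h$ and the term $\|\nabla g\|_{L^\infty}^2(\mathrm{diam})^2|\cdot|$, once scaled back, contributes only $O(h^5)$ per cube. Summing over the $\sim h^{-2}$ cubes and invoking (iii) together with (\ref{h2}) produces matrices $\bar Q_{Q'}\in\mathbb R^{3\times3}$ with $\sum_{Q'}\int_{Q'\times(-1/2,1/2)}|\nabla_h y^h-\bar Q_{Q'}|^2\,\mathrm{d}x\le Ch^2$. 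Comparing $\bar Q_{Q'}A(x')^{-1}$ on adjacent cubes, projecting onto $SO(3)$ and mollifying then yields, exactly as in \cite{FJMgeo}, a field $R^h\in W^{1,2}(\Omega,SO(3))$ with $\|\nabla R^h\|_{L^2(\Omega)}\le C$ and $\|\nabla_h y^h-R^hA\|_{L^2(\Omega^1)}\le Ch$. Along a subsequence $R^h\rightharpoonup R$ in $W^{1,2}$, hence $R^h\to R$ in $L^2$ and a.e., so $\nabla_h y^h\to RA$ in $L^2(\Omega^1)$; in particular $\partial_3 y^h\to 0$, and subtracting the averages $c^h$ gives $y^h\to y$ in $W^{1,2}(\Omega^1)$ with $y=y(x')$ and $\nabla y=R\,[Ae_1,Ae_2]\in W^{1,2}(\Omega)$, whence $y\in W^{2,2}(\Omega,\mathbb R^3)$. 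Since $A$ has the block form with $Ae_3=e_3$, the third column $Re_3$ of $Q:=RA$ is a unit vector orthogonal to $\partial_\alpha y=RAe_\alpha$, and $\det Q=\det A>0$ forces $Re_3=\vec n$; thus $Q\in\mathcal F(x')$ and $(\nabla y)^T\nabla y=[g_{\alpha\beta}]$, which is (i) and (ii).

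\emph{Step 2 (the limiting strain).} Put $G^h=\tfrac1h\big((R^h)^T\nabla_h y^h-A\big)$; by Step 1, $\|G^h\|_{L^2(\Omega^1)}\le C$, so $G^h\rightharpoonup G$ in $L^2(\Omega^1,\mathbb R^{3\times3})$ along a further subsequence. From $\nabla_h y^h=R^h(A+hG^h)$ and the commutation $\partial_3\partial_\alpha y^h=\partial_\alpha\partial_3 y^h$ one obtains the distributional identity $\partial_\alpha\big(\tfrac1h\partial_3 y^h\big)=\partial_3\big(R^hG^he_\alpha\big)$; testing it and passing to the limit, using $\tfrac1h\partial_3 y^h\to Re_3=\vec n$ in $L^2$, $G^he_\alpha\rightharpoonup Ge_\alpha$ in $L^2$, $R^h\to R$ a.e.\ and boundedly, and that $R$ is independent of $x_3$, gives $\partial_3(Ge_\alpha)=R^T\partial_\alpha\vec n$. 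Hence $Ge_\alpha$ is affine in $x_3$ with slope $R^T\partial_\alpha\vec n$, and with the block structure of $A$ and $\partial_\alpha y=RAe_\alpha$ one computes that the tangential $2\times2$ block of $A(x')G$ is $x_3(\nabla y)^T\nabla\vec n+C^0(x')$, where $C^0$ does not depend on $x_3$ and $(\nabla y)^T\nabla\vec n$ is symmetric (being minus the second fundamental form in the $\partial_\alpha y$ basis).

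\emph{Step 3 (Taylor expansion and the bound).} By (i)--(ii) and (iv), $W(x',A(x')+E)=\tfrac12\mathcal Q_3(x')(E)+o(|E|^2)$ as $E\to0$, uniformly in $x'$, with $\mathcal Q_3(x')\ge0$; moreover $\mathcal Q_3(x')(E)$ depends only on $\mathrm{sym}(A(x')E)$, since $W(x',F)$ depends only on $F^TF$. Split $\Omega^1$ into the good set $\{|G^h|\le h^{-1/2}\}$, discarding the nonnegative energy on its complement (of measure $O(h)$); on the good set $hG^h\to0$ uniformly, so by frame invariance $\tfrac1{h^2}W(x',\nabla_h y^h)=\tfrac1{h^2}W(x',A+hG^h)=\tfrac12\mathcal Q_3(x')(G^h)+o(1)|G^h|^2$ uniformly. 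With $\chi^h$ the indicator of the good set, $\chi^h\to1$ in every $L^p$ and $\chi^hG^h\rightharpoonup G$ in $L^2$, so weak lower semicontinuity of the nonnegative quadratic functional $F\mapsto\int_{\Omega^1}\mathcal Q_3(x')(F)$ gives $\liminf_{h\to0}\tfrac1{h^2}I^h(u^h)\ge\tfrac12\int_{\Omega^1}\mathcal Q_3(x')(G)\,\mathrm{d}x$. Finally, since $\mathcal Q_3(x')$ sees only $\mathrm{sym}(A\cdot)$, minimizing over the entries that do not affect the tangential block and using the definition of $\mathcal Q_2$ yields $\mathcal Q_3(x')(G)\ge\mathcal Q_2(x')\big(A_{\alpha\beta}^{-1}\,\mathrm{sym}(A(x')G)_{\mathrm{tan}}\big)$; inserting the formula from Step 2 and integrating over $x_3\in(-1/2,1/2)$ — the $x_3$-linear term integrates to $0$, the $x_3$-independent term adds only a nonnegative contribution, and $\int_{-1/2}^{1/2}x_3^2\,\mathrm{d}x_3=\tfrac1{12}$ — gives $\liminf_{h\to0}\tfrac1{h^2}I^h(u^h)\ge\tfrac1{24}\int_\Omega\mathcal Q_2(x')\big(A_{\alpha\beta}^{-1}(\nabla y)^T\nabla\vec n\big)\,\mathrm{d}x'$, which is (iii).

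\emph{Main difficulty.} The technical core is Step 1: passing from the cube-by-cube rigidity bound — which now carries the extra $\|\nabla g\|_{L^\infty}$ term, so that Theorem \ref{rigidity_main} and not \cite{FJMgeo} is needed — to a single $W^{1,2}$-controlled rotation field $R^h$, with careful handling of the boundary layer and of the error accumulated over $\sim h^{-2}$ cubes. The step truly specific to the non-Euclidean setting is the identification $\partial_3(Ge_\alpha)=R^T\partial_\alpha\vec n$ in Step 2 and the resulting emergence of $(\nabla y)^T\nabla\vec n$: one must pass to the weak limit in a distributional identity built from products of the only weakly convergent $G^h$ with $R^h$, and the curvature of the target metric enters through the derivatives of $A$ in the lower-order error terms, which must be shown to be negligible.
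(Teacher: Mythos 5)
Your plan is correct and follows the same overall strategy as the paper: apply the non-Euclidean rigidity estimate on a cover of $\Omega$ by sets of diameter comparable to $h$, extract a weakly convergent limit, identify the $x_3$-affine structure of the limiting rescaled strain, and combine a Taylor expansion of $W$ near its energy well with weak lower semicontinuity of the quadratic form $\mathcal Q_3$. The two places where you depart technically from the paper are worth flagging. (a) For the compactness step the paper does not build an $SO(3)$-valued field by projecting local rigidity matrices; instead it defines $Q^h(x')$ directly as a mollified average of $\nabla u^h$ (Lemma~\ref{lemapprox}, following Theorem 10 of \cite{FJMhier}), proves $\|\nabla Q^h\|_{L^2}\le C$ and $\|\nabla_h y^h-Q^h\|_{L^2}\le Ch$, and only afterwards projects $Q^h$ pointwise onto $\mathcal F(x')$ to obtain $\tilde Q^h=R^hA$. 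That route avoids the patching/projection argument you sketch and, notably, never needs $R^h$ itself to be bounded in $W^{1,2}$ — only $R^h\to QA^{-1}$ in $L^2$ is used, which comes for free from $Q^h\rightharpoonup Q$ in $W^{1,2}$ and $\|\tilde Q^h-Q^h\|_{L^2}\le Ch$. Your version should also work if carried out carefully, but it is more demanding than necessary. (b) To identify $\partial_3(Ge_\alpha)$, the paper uses finite difference quotients $f^{s,h}=\tfrac1{hs}\big(y^h(x+se_3)-y^h(x)\big)$ and passes to the limit in their tangential and normal derivatives; you use the distributional commutation $\partial_\alpha\big(\tfrac1h\partial_3 y^h\big)=\partial_3\big(R^hG^he_\alpha\big)$. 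Both are valid, and in fact the commutation identity is exact (since $R^h$ and $A$ depend only on $x'$, no $\nabla A$ terms appear), so the concern you raise at the end about derivatives of $A$ entering as lower-order error terms is not actually present in this approach. Step~3 (truncation by the good set $\{|G^h|\le h^{-1/2}\}$, Taylor expansion, weak lower semicontinuity, passage from $\mathcal Q_3$ to $\mathcal Q_2$, and the $\int_{-1/2}^{1/2}x_3^2\,\mathrm dx_3=\tfrac1{12}$ computation) matches the paper essentially verbatim.
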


We further prove that the lower bound in (iii) above is optimal, in the 
following sense. 
Let $y\in W^{2,2}(\Omega,\mathbb{R}^3)$ be a Sobolev regular isometric
immersion of the given mid-plate metric, that is 
$(\nabla y)^T\nabla y = [g_{\alpha\beta}]$.
The normal vector $\vec n\in W^{1,2}(\Omega, \mathbb{R}^3)$ is then given by
(\ref{normal}) and it is well defined because
$|\partial_1 y\times\partial_2 y| = (\det g)^{1/2}>0$.

\begin{theorem}\label{threcseq}
For every isometric immersion $y\in W^{2,2}(\Omega,\mathbb{R}^3)$ of $g$,
there exists a sequence of ``recovery'' deformations
$u^h\in W^{1,2}(\Omega^h,\mathbb{R}^3)$ such that the assertion (i) of
Theorem \ref{liminf} hold, together with:
\begin{equation}\label{limexact}
\lim_{h\to 0} \frac{1}{h^2} I^h(u^h) = \frac{1}{24}\int_\Omega
\mathcal{Q}_2(x') \Big(A_{\alpha\beta}^{-1} (\nabla y)^T
\nabla\vec n\Big)~\mathrm{d}x'.
\end{equation}
\end{theorem}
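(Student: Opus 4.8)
The plan is to construct explicitly a sequence of deformations $u^h:\Omega^h\to\mathbb R^3$ which realize the prescribed metric $g$ up to an error of order $h$, and whose bending energy converges to the claimed limit. The natural ansatz, mimicking the construction in \cite{FJMgeo} adapted to the non-Euclidean setting, is the first-order Kirchhoff-type expansion
\begin{equation*}
u^h(x',x_3) = y(x') + h x_3 \vec n(x') + \frac{h^2 x_3^2}{2}\, d(x') + h x_3\, w^h(x'),
\end{equation*}
where $y$ is the given $W^{2,2}$ isometric immersion, $\vec n$ its unit normal from (\ref{normal}), $d(x')\in\mathbb R^3$ a warping field (the ``second-order'' term) to be chosen so that $W$ is minimized to the relevant order, and $w^h\in W^{1,2}\cap L^\infty$ a small correction that mollifies $y$ so as to make the formal computation rigorous for merely $W^{2,2}$ data. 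Concretely one first carries out the argument assuming $y\in W^{2,\infty}$ (or $\mathcal C^\infty$), computes $\nabla u^h$, expands $W(x',\nabla u^h)$ around the well $A(x')$ using the $\mathcal C^2$ regularity (iv), and reads off that the $O(h)$ term in $\nabla u^h$ relative to the appropriate rotation is, after the right choice of $d$, exactly the matrix whose tangential part is $(\nabla y)^T\nabla\vec n$ (the second fundamental form of the immersed surface, measured against the metric); the quadratic term in the Taylor expansion then integrates in $x_3$ over $(-h/2,h/2)$ to produce the factor $\frac{1}{24}$ and the reduced quadratic form $\mathcal Q_2$ via the minimization over the normal component, which is precisely how $\mathcal Q_2$ was defined in Theorem \ref{liminf}(iii).

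The key steps, in order, are: (1) fix the warping field $d(x')$ pointwise as the minimizer in the definition of $\mathcal Q_2$, i.e. choose the normal component of the first-order correction to realize $\mathcal Q_2(x')(A_{\alpha\beta}^{-1}(\nabla y)^T\nabla\vec n)$ rather than $\mathcal Q_3$; (2) verify the asymptotic identity $\big((\nabla u^h)^T\nabla u^h\big)(x',x_3) = g(x') + O(h)$ so that $\mathrm{dist}(\nabla u^h(x),\mathcal F(x))=O(h)$ pointwise, hence (by the growth/regularity of $W$) $W(x',\nabla u^h) = O(h^2)$ with the leading coefficient given by $\frac12\mathcal Q_3$ evaluated on the $O(h)$ part of the strain; (3) integrate and pass to the limit to obtain (\ref{limexact}), and check assertion (i) of Theorem \ref{liminf} for this sequence — this is immediate since $u^h(x',hx_3)-c^h\to y$ in $W^{1,2}$ by construction; (4) remove the extra smoothness assumption on $y$ by a density/diagonalization argument: approximate $y$ in $W^{2,2}$ by smooth (or $W^{2,\infty}$) maps $y_k$ that are ``almost isometric immersions,'' build the recovery sequences $u^{h}_k$ for each, estimate the defect using Theorem \ref{rigidity_main} and the continuity of the right-hand side of (\ref{limexact}) under $W^{2,2}$ convergence, and extract a diagonal sequence.

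The main obstacle I expect is step (4): a generic $W^{2,2}$ isometric immersion of $[g_{\alpha\beta}]$ need not be approximable in $W^{2,2}$ by \emph{exact} smooth isometric immersions (such approximations can fail, and this is exactly the subtle regularity issue the paper alludes to in connection with \cite{kuiper,HH}). One therefore cannot simply invoke a smoothing of $y$ that stays on the constraint manifold. The fix is to smooth $y$ freely — obtaining $y_k\to y$ in $W^{2,2}$ with $(\nabla y_k)^T\nabla y_k = [g_{\alpha\beta}] + \varepsilon_k$, $\|\varepsilon_k\|\to 0$ — and then incorporate the small metric mismatch $\varepsilon_k$ into the $h$-dependent corrector $w^h$, absorbing it at the cost of a controlled energy error; making this absorption quantitative, and choosing the order in which $h\to0$ and $k\to\infty$, is the technical heart of the argument. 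The frame-invariance (i), the normalization (ii), the quadratic growth (iii) and especially the $\mathcal C^2$-regularity (iv) of $W$ near the well are used throughout to control the Taylor remainder; the lower bound from Theorem \ref{liminf}(iii) guarantees that the constructed sequence is genuinely optimal and no better competitor exists.
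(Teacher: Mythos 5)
Your steps (1)--(3) match the paper's construction: the Kirchhoff ansatz $u^h(x',x_3)=y(x')+x_3\vec n(x')+\tfrac{x_3^2}{2}d(x')$ on $\Omega^h$, with the warping field $d$ chosen pointwise (via the linear maps of Lemma~\ref{lem2}) to realize the minimum defining $\mathcal{Q}_2$, Taylor expansion of $W$ around $Q=[\partial_1 y,\partial_2 y,\vec n]\in\mathcal{F}(x')$, and integration of $x_3^2$ over $(-1/2,1/2)$ producing the factor $\tfrac{1}{24}$. That part is essentially the paper's proof.

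The genuine divergence, and the gap, is in your step (4). The paper does \emph{not} remove the $W^{2,2}$-regularity obstacle by mollifying $y$ into smooth near-isometries and diagonalizing. It uses a Lusin-type approximation (Liu \cite{Liu50}, see also \cite{FJMhier}): it replaces $y,\vec n$ by $y^h\in W^{2,\infty}$, $\vec n^h\in W^{1,\infty}$ with $y^h\to y$ in $W^{2,2}$, $h(\|y^h\|_{W^{2,\infty}}+\|\vec n^h\|_{W^{1,\infty}})\leq\varepsilon_0$, and -- the key feature -- $y^h=y$, $\vec n^h=\vec n$ outside a set whose measure is $o(h^2)$. On that good set the pullback metric is $g$ exactly and the Taylor expansion applies pointwise; on the bad set the energy density stays bounded (by the $W^{2,\infty}$ control and $\mathcal{C}^2$-regularity of $W$), so its contribution to $h^{-2}I^h(u^h)$ is $O(h^{-2})\cdot o(h^2)=o(1)$. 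There is no metric mismatch to absorb.

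Your proposed fix -- mollify freely to $y_k$ with $(\nabla y_k)^T\nabla y_k=g+\varepsilon_k$ and absorb $\varepsilon_k$ in a corrector -- faces a scaling obstruction you do not resolve. A zeroth-order stretching defect of amplitude $\varepsilon_k$ costs a membrane energy of order $\|\varepsilon_k\|_{L^2}^2/h$ in the rescaled functional $h^{-2}I^h$, so to keep this $O(1)$ you need $\|\varepsilon_k\|_{L^2}^2=O(h)$; that forces $k\to\infty$ with $h$, while the Taylor remainder requires $h\|y_k\|_{W^{2,\infty}}$ small, and for a generic mollification of a $W^{2,2}$ isometry there is no quantitative link between $\|\varepsilon_k\|$ and $\|y_k\|_{W^{2,\infty}}$ that makes both conditions compatible. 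Moreover a corrector of the form $hx_3 w^h$ enters the strain at order $h$ and cannot cancel a zeroth-order metric defect. The Lusin truncation is precisely the device that sidesteps all of this, and it is the ingredient missing from your proposal.
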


A corollary of Theorems \ref{liminf} and \ref{threcseq},
proved in section \ref{isometrie}, provides a necessary
and sufficient condition for the existence of $W^{2,2}$ isometric immersions
of $(\Omega, [g_{\alpha\beta}])$:

\begin{theorem}\label{isoim}
Let $[g_{\alpha\beta}]$ be a smooth metric on the midplate
$\Omega\subset\mathbb{R}^2$. Then:
\begin{itemize}
\item[(i)] $[g_{\alpha\beta}]$ has an isometric immersion
$y\in W^{2,2}(\Omega,\mathbb{R}^3)$ if and only if $\frac{1}{h^2}\inf I^h\leq
C$, for a uniform constant $C$.
\item[(ii)] $[g_{\alpha\beta}]$ has an isometric immersion
$y\in W^{2,2}(\Omega,\mathbb{R}^2)$ (or, equivalently,
the Gaussian curvature $\kappa_{[g_{\alpha\beta}]}\equiv 0$)
if and only if $\lim_{h\to 0}\frac{1}{h^2}\inf I^h = 0$.
\item[(iii)] If the Gaussian curvature $\kappa_{[g_{\alpha\beta}]}\not\equiv 0$
in $\Omega$ then $\frac{1}{h^2}\inf I^h \geq c> 0$.
\end{itemize}
\end{theorem}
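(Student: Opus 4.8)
The plan is to obtain all three statements as essentially formal consequences of the $\Gamma$-convergence package in Theorems \ref{liminf} and \ref{threcseq}, using only two geometric ingredients: (a) the limiting density $\mathcal{Q}_2(x')(\cdot)$ is positive definite on symmetric matrices (which follows from assumptions (i)--(iv) on $W$ in the standard way), and (b) a $W^{2,2}$ isometric immersion $y$ of $[g_{\alpha\beta}]$ whose second fundamental form $(\nabla y)^T\nabla\vec n$ vanishes a.e.\ must take values, after a rigid motion, in an affine plane. For (i), the forward implication is immediate: given a $W^{2,2}$ immersion $y$, Theorem \ref{threcseq} produces $u^h$ with $\frac{1}{h^2}I^h(u^h)\to\frac{1}{24}\int_\Omega\mathcal{Q}_2\big(A_{\alpha\beta}^{-1}(\nabla y)^T\nabla\vec n\big)<\infty$, the finiteness being clear since $y\in W^{2,2}$, $\vec n\in W^{1,2}$ and $\mathcal{Q}_2$ has quadratic growth; hence $\frac{1}{h^2}\inf I^h$ stays bounded as $h\to 0$. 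Conversely, if $\frac{1}{h^2}\inf I^h\leq C$, choose near-minimizers $u^h$ with $I^h(u^h)\leq \inf I^h+h^3\leq(C+1)h^2$ for $h\leq 1$; then (\ref{h2}) holds, and Theorem \ref{liminf}(i)--(ii) yields, along a subsequence, a limit $y\in W^{2,2}(\Omega,\mathbb{R}^3)$ with $(\nabla y)^T\nabla y=[g_{\alpha\beta}]$, which is the desired immersion.

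For (ii), if $[g_{\alpha\beta}]$ admits a $W^{2,2}$ immersion $y$ into $\mathbb{R}^2\subset\mathbb{R}^3$, then $\vec n\equiv(0,0,\pm1)$, so $\nabla\vec n=0$ and the right-hand side of (\ref{limexact}) vanishes; Theorem \ref{threcseq} then gives a sequence with $\frac{1}{h^2}I^h(u^h)\to 0$, whence $\frac{1}{h^2}\inf I^h\to 0$. Conversely, assume $\frac{1}{h^2}\inf I^h\to 0$ and take near-minimizers $u^h$ as above; Theorem \ref{liminf} produces a subsequential $W^{2,2}$ limit immersion $y$ with
$$0=\lim\frac{1}{h^2}I^h(u^h)\geq\liminf\frac{1}{h^2}I^h(u^h)\geq\frac{1}{24}\int_\Omega\mathcal{Q}_2\big(A_{\alpha\beta}^{-1}(\nabla y)^T\nabla\vec n\big)\geq 0,$$
so $\int_\Omega\mathcal{Q}_2(A_{\alpha\beta}^{-1}(\nabla y)^T\nabla\vec n)=0$. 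Since $(\nabla y)^T\nabla\vec n$ is symmetric a.e.\ and $A_{\alpha\beta}^{-1}$ is symmetric positive definite, positive definiteness of $\mathcal{Q}_2$ on symmetric matrices together with the elementary fact that an invertible symmetric matrix $A$ anticommuting with a $2\times2$ skew matrix $J$ forces $J=0$ (because $\mathrm{tr}\,A>0$) gives $(\nabla y)^T\nabla\vec n=0$ a.e. Then $\partial_\beta\vec n$ is orthogonal to $\partial_1y$, $\partial_2y$ and $\vec n$, hence $\nabla\vec n=0$, $\vec n$ is constant on connected $\Omega$, and $y$ maps after a rotation into $\mathbb{R}^2$, producing the planar $W^{2,2}$ immersion. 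The equivalence with $\kappa_{[g_{\alpha\beta}]}\equiv 0$ then follows from classical facts: a $W^{2,2}$ planar immersion has vanishing second fundamental form, so by the (Sobolev version of the) Gauss equation $\kappa\equiv0$; conversely a smooth flat metric on a simply connected $\Omega$ is realized by a smooth, a fortiori $W^{2,2}$, immersion into $\mathbb{R}^2$.

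Finally, (iii) is the quantitative strengthening of (ii). If $\kappa_{[g_{\alpha\beta}]}\not\equiv0$ then, by (ii), no planar $W^{2,2}$ immersion exists, so $\frac{1}{h^2}\inf I^h\not\to0$; to upgrade this to a uniform bound, suppose $\liminf_{h\to0}\frac{1}{h^2}\inf I^h=0$ along some $h_k\to0$. Repeating the near-minimizer/compactness argument of (ii) along $(h_k)$ yields a $W^{2,2}$ immersion of $[g_{\alpha\beta}]$ with $\int_\Omega\mathcal{Q}_2(A_{\alpha\beta}^{-1}(\nabla y)^T\nabla\vec n)=0$, hence a planar immersion, hence $\kappa\equiv0$, a contradiction. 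Therefore $\frac{1}{h^2}\inf I^h\geq c>0$ for all sufficiently small $h$; for $h$ bounded away from $0$ the strict positivity of $\inf I^h$ is already guaranteed by Theorem \ref{thbound_below}, since the Riemann curvature of $g$ on $\Omega^h$ is nonzero precisely when $\kappa_{[g_{\alpha\beta}]}\not\equiv0$.

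I expect the only genuinely delicate points to be the geometric rigidity in ingredient (b) --- in particular justifying the identity $\partial_\alpha y\cdot\partial_\beta\vec n=-\partial_\alpha\partial_\beta y\cdot\vec n$ and the orthogonality relations at the level of $W^{2,2}$ regularity --- and the two implications relating $W^{2,2}$ planar immersions to the vanishing of $\kappa$, which rely on a Sobolev Theorema Egregium and, for the converse, on the simple-connectedness of $\Omega$. Everything else reduces to choosing near-minimizers and invoking Theorems \ref{liminf}, \ref{threcseq} and \ref{thbound_below}.
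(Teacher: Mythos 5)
Your overall architecture exactly mirrors the paper's: part (i) is read off from Theorems \ref{liminf} and \ref{threcseq} via near-minimizers; part (ii) reduces to showing $\int_\Omega\mathcal{Q}_2(A_{\alpha\beta}^{-1}\Pi)=0$ with $\Pi=(\nabla y)^T\nabla\vec n$ forces $\Pi=0$; and part (iii) follows from (ii) (the paper dismisses this as ``(ii) clearly implies (iii)'' where you correctly observe one must run the compactness argument along any subsequence with $h_k^{-2}\inf I^{h_k}\to 0$, which the proof of (ii) silently accommodates). You also helpfully unpack two steps the paper leaves implicit: that $\Pi=0$ gives a constant normal and hence a planar image (modulo connectivity), and the equivalence with $\kappa\equiv 0$ (modulo simple connectivity of $\Omega$ for the converse).

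The one place where you deviate from the paper and where your phrasing is not quite right is the algebra showing $\Pi=0$. You argue via ``positive definiteness of $\mathcal{Q}_2$ on symmetric matrices'' plus the anticommutation lemma, but the argument of $\mathcal{Q}_2$ is $A_{\alpha\beta}^{-1}\Pi$, which is not symmetric, so positive-definiteness on $\mathrm{sym}$ does not apply directly; moreover $\mathcal{Q}_2$ does not annihilate $\mathrm{skew}$ --- its degenerate directions are $\mathrm{skew}\cdot A_{\alpha\beta}$. The paper's route is the clean one: $\mathcal{Q}_2$ depends only on $\mathbb{P}_{\mathrm{sym}\cdot A_{\alpha\beta}^{-1}}$ of its argument and is nondegenerate on that subspace, so $\mathcal{Q}_2(A^{-1}\Pi)=0$ forces $A^{-1}\Pi A^{-1}\in\mathrm{skew}$; since $A$ and $\Pi$ are symmetric this matrix is also symmetric, hence zero, hence $\Pi=0$ --- with no anticommutation lemma needed. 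Your conclusion is correct and the ingredients you cite (pos.~def.\ on $\mathrm{sym}$, anticommutation) are true statements about $\mathcal{Q}_2$ and $2\times 2$ matrices, but as stated they do not link up into a valid chain of implications; you should restate this step using the decomposition $\mathbb{R}^{2\times 2}=\mathrm{sym}\cdot A_{\alpha\beta}^{-1}\oplus\mathrm{skew}\cdot A_{\alpha\beta}$ rather than $\mathrm{sym}\oplus\mathrm{skew}$.
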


The existence (or lack of thereof) of local or global isometric
immersions of a given $2$d Riemannian manifold into $\mathbb{R}^3$
is a longstanding problem in differential
geometry, its main feature being finding the optimal regularity.
By a classical result of Kuiper \cite{kuiper},
a $\mathcal{C}^1$ isometric embedding into $\mathbb R^3$  can be 
obtained by means of convex integration (see also \cite{gromov}). 
This regularity is far from $W^{2,2}$, where information about the second
derivatives is also available.
On the other hand, a smooth isometry exists for some special cases, e.g. for
smooth metrics with uniformly positive or negative Gaussian curvatures on bounded
domains in ${\mathbb R^2}$ (see \cite{HH}, Theorems 9.0.1 and 10.0.2).
Counterexamples to such theories are largely unexplored.
By \cite{iaia}, there exists an analytic metric $[g_{\alpha\beta}]$ with
nonnegative Gaussian curvature on $2$d sphere,
with no $\mathcal{C}^3$ isometric embedding. However such metric
always admits a $\mathcal{C}^{1,1}$ embedding (see \cite{GL} and \cite{HZ}).
For a related example see also \cite{pogo}.

\medskip

Finally, notice that Theorems \ref{liminf} and \ref{threcseq} can be summarized
using the language of $\Gamma$-convergence \cite{dalmaso}.
Recall  that a sequence of functionals
$\mathcal{F}^h:X\longrightarrow \overline{\mathbb{R}}$ defined on a metric
space $X$, is said to $\Gamma$-converge, as $h\to 0$, to
$\mathcal{F}:X\longrightarrow \overline{\mathbb{R}}$ provided that
the following two conditions hold:
\begin{itemize}
\item[(i)] For any converging sequence $\{x^h\}$ in $X$:
\begin{equation*}\label{Gamma1}
\mathcal{F}\left(\lim_{h\to 0} x^h\right) \leq \liminf_{h\to 0}
\mathcal{F}^h(x^h).
\end{equation*}
\item[(ii)] For every $x\in X$, there exists a sequence $\{x^h\}$ converging
to $x$ and such that:
\begin{equation*}\label{Gamma2}
\mathcal{F}(x) = \lim_{h\to 0} \mathcal{F}^h(x^h).
\end{equation*}
\end{itemize}

\begin{corollary}
The sequence of functionals $\mathcal{F}^h: W^{1,2}(\Omega^{1},\mathbb{R}^3)
\longrightarrow \overline{\mathbb{R}}$, given by:
$$\mathcal{F}^h(y(x)) = \frac{1}{h^2}I^h(y(x',hx_3))$$
$\Gamma$-converges, as $h\to 0$, to:
\begin{equation*}
\mathcal{F}(y) = \left\{\begin{array}{ll}
{\displaystyle \frac{1}{24}\int_\Omega
\mathcal{Q}_2(x') \Big(A_{\alpha\beta}^{-1} (\nabla y)^T
\nabla\vec n\Big)~\mathrm{d}x'}
& \mbox{ if $y$ is a $W^{2,2}$ isometric immersion on $[g_{\alpha\beta}]$}
\\
+\infty & \mbox{ otherwise.}
\end{array}\right.
\end{equation*}
Consequently, the (global) approximate minimizers of $\mathcal{F}^h$
converge to a global minimizer of $\mathcal{F}$.
\end{corollary}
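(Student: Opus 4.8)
The plan is to obtain the Corollary as a direct consequence of Theorems \ref{liminf} and \ref{threcseq}: the first delivers the lower ($\Gamma$-$\liminf$) inequality together with the compactness needed for equi-coercivity, and the second delivers the recovery ($\Gamma$-$\limsup$) inequality. A preliminary observation I would record is that, since $W(x,\cdot)$ is applied only to $\nabla u$, the energies $I^h$ and hence $\mathcal{F}^h$ are invariant under adding a constant vector to the deformation, matching the translation invariance of the limit $\mathcal{F}$; accordingly all convergences will be understood modulo additive constants, i.e. really on the quotient $W^{1,2}(\Omega^1,\mathbb{R}^3)/\mathbb{R}^3$. I would also note that the rescaling implicit in the definition $\mathcal{F}^h(y)=\frac{1}{h^2}I^h(y(x',hx_3))$ is exactly the one relating a deformation $u^h$ on $\Omega^h$ to its renormalization $y^h(x',x_3)=u^h(x',hx_3)-c^h$ used in Theorem \ref{liminf}.

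For the lower bound inequality I would take $y^h\to y$ in $W^{1,2}(\Omega^1,\mathbb{R}^3)$, set $L=\liminf_{h\to 0}\mathcal{F}^h(y^h)$, assume $L<\infty$, and pass to a subsequence realizing $L$; then $\mathcal{F}^h(y^h)\le L+1$ for small $h$ means the associated deformations $u^h$ on $\Omega^h$ obey $I^h(u^h)\le(L+1)h^2$, which is hypothesis (\ref{h2}). Theorem \ref{liminf} then produces constants $c^h$ so that, along a further subsequence, $y^h-c^h\to\tilde y$ in $W^{1,2}(\Omega^1,\mathbb{R}^3)$, with $\tilde y$ a $W^{2,2}$ isometric immersion of $[g_{\alpha\beta}]$ and $\liminf_{h\to 0}\frac{1}{h^2}I^h(u^h)\ge\frac{1}{24}\int_\Omega\mathcal{Q}_2(x')(A_{\alpha\beta}^{-1}(\nabla\tilde y)^T\nabla\vec n)\,\mathrm{d}x'$. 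Since $y^h\to y$ as well, the $c^h$ converge to an honest constant vector $c=y-\tilde y$ (an $L^2$-limit of constants), so $y=\tilde y+c$ is again a $W^{2,2}$ isometric immersion with the same gradient and unit normal, whence $\mathcal{F}(y)=\mathcal{F}(\tilde y)\le L$. That is the first $\Gamma$-convergence condition.

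For the recovery sequence I would split into two cases. If $y\in W^{1,2}(\Omega^1,\mathbb{R}^3)$ is not a $W^{2,2}$ isometric immersion of $[g_{\alpha\beta}]$, so $\mathcal{F}(y)=+\infty$, the constant sequence $y^h\equiv y$ works, since by the lower bound inequality just proved $\liminf_{h\to 0}\mathcal{F}^h(y)\ge\mathcal{F}(y)=+\infty$. If $y$ is a $W^{2,2}$ isometric immersion, I would invoke Theorem \ref{threcseq} to get recovery deformations $u^h$ on $\Omega^h$ whose renormalizations $y^h$ converge to $y$ in $W^{1,2}(\Omega^1,\mathbb{R}^3)$ and satisfy $\mathcal{F}^h(y^h)=\frac{1}{h^2}I^h(u^h)\to\frac{1}{24}\int_\Omega\mathcal{Q}_2(x')(A_{\alpha\beta}^{-1}(\nabla y)^T\nabla\vec n)\,\mathrm{d}x'=\mathcal{F}(y)$, by (\ref{limexact}); that is the second $\Gamma$-convergence condition. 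Finally, the compactness part of Theorem \ref{liminf} is exactly equi-coercivity of $\{\mathcal{F}^h\}$ modulo constants, so the standard consequence of $\Gamma$-convergence gives $\frac{1}{h^2}\inf I^h=\inf\mathcal{F}^h\to\min\mathcal{F}$ and convergence (modulo constants, up to subsequences) of approximate minimizers of $\mathcal{F}^h$ to minimizers of $\mathcal{F}$; such minimizers exist precisely when a $W^{2,2}$ isometric immersion does, and otherwise $\mathcal{F}\equiv+\infty$ with $\inf\mathcal{F}^h\to+\infty$, consistently with Theorem \ref{isoim}(i).

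I do not expect any serious obstacle here. The one point that needs genuine care is the interplay between the ``up to a subsequence, modulo additive constants'' statements of Theorems \ref{liminf}--\ref{threcseq} and the ``honest convergence of the whole sequence in the fixed space $W^{1,2}(\Omega^1,\mathbb{R}^3)$'' demanded by the definition of $\Gamma$-convergence; the translation invariance of $\mathcal{F}^h$ and $\mathcal{F}$ is what makes the two compatible, and once one agrees to work modulo constants the argument is pure bookkeeping on top of the two main theorems.
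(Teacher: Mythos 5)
Your proposal is correct and is essentially the argument the paper has in mind when it presents the corollary as a ``summary'' of Theorems~\ref{liminf} and~\ref{threcseq}: the lower bound and compactness come from Theorem~\ref{liminf}, the recovery sequence from Theorem~\ref{threcseq}, the diverging case from the liminf inequality itself, and convergence of approximate minimizers from $\Gamma$-convergence plus equi-coercivity. You are right that the one point requiring care is that Theorem~\ref{liminf} gives convergence only up to additive constants $c^h$ and subsequences, whereas the $\Gamma$-convergence definition asks for genuine convergence in $W^{1,2}(\Omega^1,\mathbb{R}^3)$; your observation that $c^h = y^h - \tilde y^h$ converges to a constant because both sides do, and that both $\mathcal{F}^h$ and $\mathcal{F}$ are translation invariant, is exactly what reconciles the two. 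The paper leaves this implicit; spelling it out as you do is a genuine (if small) tightening, but it does not change the route.
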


\section{The non-Euclidean elasticity functional - a proof of Theorem 
\ref{thbound_below}}

In the sequel, we will need some differential geometry notation.
We shall write $|g| = \det g$ and $g^{-1}=[g^{ij}]$.
The Christoffel symbols are then given, using the Einstein summation, as:
$$\Gamma^m_{ij} = \frac{1}{2} g^{km}(\partial_i g_{kj} + \partial_j g_{ik}
-\partial_kg_{ij}).$$
By $\nabla$ we denote the covariant gradient of a scalar/vector field or 
a differential form, while by $\nabla_g$ we denote the contravariant gradient. 
The covariant divergence of a vector field $u$ can be written as: 
$$\mbox{div}_g u = (\nabla_i u)^i = 
\frac{1}{\sqrt{|g|}} \partial_i (\sqrt{|g|} u^i)$$
and the scalar product of two  vector fields (that is of two 
$(1,0)$ contravariant tensors) has the form: 
$\langle u,v \rangle_g= u^i g_{ij} v^j$.
We shall often use
the Laplace-Beltrami operator $\Delta_g$ of scalar fields $f$:
$$\Delta_g f = \mbox{div}_g (\nabla_g f) =
\frac{1}{\sqrt{|g|}} \partial_i (\sqrt{|g|} g^{ij}\partial_j f).$$
By $R=[R_{ijkl}]$ we mean the ($(0,4)$ covariant)
Riemann curvature tensor, and by $Ric_g$ the ($(0,2)$ covariant) Ricci curvature 
tensor.

Towards the proof of Theorem \ref{thbound_below},
we first derive an auxiliary result, which is somewhat standard in
differential geometry (see e.g. \cite{reshetnyak, ciarlet_new}). 

\begin{lemma}\label{thsmoothandRicci}
Let $u\in W^{1,1}(\mathcal{U},\mathbb{R}^n)$ satisfy 
$\nabla u(x)\in \mathcal{F}(x)$ for a.a. $x\in\mathcal{U}$.  Then $u$ is smooth
and $R\equiv 0$.
\end{lemma}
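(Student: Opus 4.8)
The plan is to first upgrade the regularity of $u$ from $W^{1,1}$ to $W^{1,\infty}$ and then to $\mathcal{C}^\infty$, and only afterwards extract the vanishing of the curvature. Since $\nabla u(x) = R(x) A(x)$ with $R(x) \in SO(n)$ and $A = \sqrt{g}$ is smooth and bounded above and below, we immediately get $|\nabla u(x)| = |R(x)A(x)| \le \|A\|_{L^\infty}$ a.e., so $u \in W^{1,\infty}(\mathcal{U}, \mathbb{R}^n)$ and in particular $u$ is Lipschitz. The pull-back metric identity $(\nabla u)^T \nabla u = g$ holds a.e.

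The key step is a bootstrap argument showing $u \in \mathcal{C}^\infty$. Write $\nabla u = RA$; then $R = \nabla u \, A^{-1} \in W^{1,\infty}$, taking values in $SO(n)$. I would compute the second-order information: formally, $\partial_k \partial_j u^i$ exists as a distribution, and the Christoffel symbols of $g$ govern it. The standard computation (as in Ciarlet–Larsson or Reshetnyak) is that for a map with $(\nabla u)^T \nabla u = g$, one has $\partial_k(\partial_j u) = \Gamma^m_{jk}\, \partial_m u$ in the sense of distributions — this is obtained by differentiating $\langle \partial_j u, \partial_l u\rangle = g_{jl}$ and using the algebraic identity expressing $\Gamma$ in terms of $\partial g$. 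Since $\partial_m u \in W^{1,\infty}$ and $\Gamma^m_{jk}$ is smooth, the right-hand side lies in $W^{1,\infty}$, hence $\partial_j u \in W^{2,\infty}_{loc}$, i.e. $u \in W^{3,\infty}_{loc}$. Iterating, $u \in \mathcal{C}^\infty$. A clean way to run this rigorously: mollify to get $u_\varepsilon$, verify the identity $\partial_k \partial_j u_\varepsilon = \Gamma^m_{jk}\partial_m u_\varepsilon + o(1)$ and pass to the limit, or appeal directly to the cited classical results for the $W^{1,\infty}$ case.

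Once $u$ is smooth with $(\nabla u)^T\nabla u = g$ everywhere, $u$ is a smooth isometric immersion of $(\mathcal{U}, g)$ into flat $\mathbb{R}^n$. Then the pull-back of the (identically zero) Riemann curvature tensor of the Euclidean metric on $\mathbb{R}^n$ equals the Riemann curvature tensor of $g$, by naturality of curvature under isometric immersions into a flat space (the Gauss equation with ambient curvature zero and, since $\dim \mathcal{U} = n$ equals the ambient dimension, no second fundamental form correction — $u$ is a local diffeomorphism by $\det \nabla u > 0$). Hence $R \equiv 0$.

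The main obstacle is the regularity bootstrap: one must justify that the second-derivative identity $\partial_k\partial_j u = \Gamma^m_{jk}\partial_m u$ genuinely holds for a merely Lipschitz solution of the metric equation, since a priori $u$ has no weak second derivatives. This is handled by the mollification/difference-quotient argument (or by invoking the classical rigidity results of Reshetnyak and Ciarlet cited in the paper). Everything after that — the smooth bootstrap and the Gauss-equation conclusion that $R\equiv 0$ — is routine differential geometry.
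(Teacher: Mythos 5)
Your overall plan (Lipschitz $\Rightarrow$ smooth $\Rightarrow$ isometric immersion into flat space $\Rightarrow$ $R\equiv 0$) matches the paper's, and your final step is correct and essentially identical. The difference — and the difficulty — is entirely in how you establish smoothness, and there your route differs from the paper's and has a genuine gap.

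You want to derive $\partial_k\partial_j u = \Gamma^m_{jk}\,\partial_m u$ by differentiating $\langle\partial_j u,\partial_l u\rangle = g_{jl}$. For Lipschitz $u$ this formal computation does not go through: $\partial_j u\in L^\infty$ only, so $\partial_k\partial_j u$ is merely a first-order distribution, and the pairing $\partial_k\partial_j u\cdot\partial_l u$ of a distribution against an $L^\infty$ function is not defined; the Leibniz rule needed to split $\partial_k(\partial_j u\cdot\partial_l u)$ into two such terms is exactly what is unavailable at this regularity. You acknowledge this and propose mollification, but mollifying $u$ does not help directly because $u_\varepsilon$ is no longer an isometric immersion of $g$, so the identity fails for $u_\varepsilon$ and there is no obvious commutator estimate that controls the error. (Also, a minor slip: from $\nabla u\in L^\infty$ you only get $R=\nabla u\,A^{-1}\in L^\infty$, not $W^{1,\infty}$; that extra regularity of $R$ is precisely what you are trying to prove.)

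The paper closes the gap with a different, self-contained argument that avoids manipulating second derivatives of $u$ entirely. From $\nabla u = RA$ one computes $\det\nabla u=\sqrt{|g|}$ and $\operatorname{cof}\nabla u=\sqrt{|g|}\,(\nabla u)\,g^{-1}$. The Piola identity $\operatorname{div}(\operatorname{cof}\nabla u)=0$, which holds distributionally for Lipschitz maps, then reads $\partial_i\bigl(\sqrt{|g|}\,g^{ij}\partial_j u^m\bigr)=0$, i.e. $\Delta_g u^m=0$ for each component. Since $\Delta_g$ is a divergence-form elliptic operator with smooth coefficients, elliptic regularity gives $u\in\mathcal{C}^\infty$ at once. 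That converts the rigidity problem into linear elliptic regularity and sidesteps the product-rule issue. Your alternative of citing Reshetnyak/Ciarlet would also close the gap (and the paper itself points to those references as background), but as written your direct derivation of the Hessian identity is not justified at $W^{1,\infty}$ regularity.
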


\begin{proof}
Write $u=(u^1,\ldots, u^n)$ and notice that in view of the assumption, 
each $u^i\in W^{1,\infty}$. Moreover:
$$\det \nabla u = \sqrt{|g|}, \qquad 
\mbox{cof } \nabla u = \sqrt{|g|} (\nabla u) g^{-1}.$$
Recall that for a matrix $F\in\mathbb{R}^{n\times n}$, $\mbox{cof } F$ 
denotes the matrix of cofactors of $F$, that is 
$(\mbox{cof } F)_{ij} = (-1)^{i+j} \det \hat F_{ij}$, where 
$\hat F_{ij}\in\mathbb{R}^{(n-1)\times (n-1)}$ is obtained from $F$ by deleting 
its $i$th row and $j$th column.
Since $\mbox{div}(\mbox{cof } \nabla u) = 0$ (the divergence 
of the cofactor matrix is always taken row-wise), 
the Laplace-Beltrami operator of each component $u^m$ is zero:
$$\Delta_g u^m = 0,$$
and therefore we conclude that $u^m\in \mathcal{C}^\infty$. The second statement 
follows immediately since $u : \mathcal{U} \to \mathbb{R}^n$ is a smooth isometric embedding of  
$(\mathcal{U}, g)$ into the Euclidean space $\mathbb{R}^n$.
\end{proof}

\begin{remark}
For the convenience of the reader we now give a simple argument proving that
the existence of a smooth $u$ as in Lemma \ref{thsmoothandRicci} 
implies that the Ricci curvature
tensor $Ric_g\equiv 0$. Recall that when $n=3$ (which is the dimension
relevant to our main results), $Ric_g\equiv 0$ if and only if
$R\equiv 0$ \cite{peterson}.

We shall first deduce that the second $g$-covariant derivative of each scalar field 
$u^m$ vanishes. Since the vectors $\{\partial_s u\}_{s=1}^n$ 
form a basis of $\mathbb{R}^n$ 
($\nabla u$ being invertible), it is enough to consider the following linear 
combination of components of $\nabla^2u^m$ (for fixed covariant indices $i,j$): 
\begin{equation*}
\begin{split}
\sum_{m=1}^n \nabla_i(\nabla u^m)_j \cdot \partial_s u^m
&= (\partial^2_{ij}u - \Gamma_{ij}^k\partial_ku)\partial_su\\
&= \partial^2_{ij}u\partial_su - \frac{1}{2} g^{kl} (\partial_j g_{il}
+ \partial_i g_{jl} - \partial_l g_{ij})\partial_k u \partial_s u\\
&= \partial^2_{ij}u\partial_su - \frac{1}{2} g^{kl} (\partial_j g_{is}
+ \partial_i g_{js} - \partial_s g_{ij}) = 0.
\end{split}
\end{equation*}
Hence $\nabla^2u^m=0$ and since $|\nabla^2u^m|_g^2 = |\nabla^2_gu^m|^2_g$,
we also see that the second contravariant gradient of each component $u^m$
vanishes:
\begin{equation}\label{glupie}
\nabla^2_gu^m = 0.
\end{equation}
On the other hand, $(\nabla u)^T(\nabla u) g^{-1} = \mbox{Id}$, so
$(\nabla u) g^{-1} (\nabla u)^T = \mbox{Id}$ which means precisely that
$\partial_iu^m g^{ij}\partial_j u^m =1$. Therefore:
\begin{equation}\label{glupie2}
|\nabla_gu^m|_g^2 = |\nabla u^m|^2_g = 1.
\end{equation}
Applying now (\ref{glupie}) and (\ref{glupie2}) in the following
Bochner's formula (see e.g. \cite{peterson}):
\begin{equation*}\label{bochner}
\frac{1}{2}\Delta_g |\nabla_g f|^2 = 
\langle \nabla_g\Delta_g f, \nabla_g f\rangle_g + |\nabla_g^2f|^2 
+ Ric_g (\nabla_g f, \nabla_g f),
\end{equation*}
where we take the scalar field $f=u^m$, we obtain:
$$ Ric_g (\nabla_gu^m, \nabla_gu^m) = 0.$$
Since $\{\nabla_gu^m\}_{m=1}^n$ form a basis of $\mathbb{R}^n$ and $Ric_g$ is
a symmetric bilinear form, we conclude that $Ric_g\equiv 0$.
\endproof
\end{remark}

We now prove two further auxiliary results.

\begin{lemma}\label{lem2.3}
There is a constant $M>0$, depending only on $\|g\|_{L^\infty}$ and such that
for every $u\in W^{1,2}(\mathcal{U},\mathbb{R}^n)$ there exists a truncation
$\bar u\in W^{1,2}(\mathcal{U},\mathbb{R}^n)$ with the properties:
$$ \|\nabla\bar u\|_{L^\infty} \leq M, \qquad  
\|\nabla u - \nabla\bar u\|_{L^2(\mathcal{U})}^2\leq 4 I(u)
\qquad \mathrm{and} \qquad I(\bar u)\leq 10 I(u).$$
\end{lemma}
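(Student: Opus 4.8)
The plan is to obtain $\bar u$ as a Lipschitz (Lusin--type) truncation of $u$ cut off at a \emph{fixed} height, the height being dictated solely by $\|g\|_{L^\infty}$ thanks to the following elementary observation: the energy wells are uniformly bounded. Indeed, since $A(x)=\sqrt{g(x)}$ is symmetric with $|A(x)|^2=\mathrm{tr}\,g(x)$, we have $|F|=|A(x)|\le c_0$ for every $F\in\mathcal F(x)$ and every $x$, with $c_0$ depending only on $\|g\|_{L^\infty}$. Hence $\mathrm{dist}(F,\mathcal F(x))\ge|F|-c_0$, so that
\begin{equation*}
|F|\ge 2c_0\quad\Longrightarrow\quad \mathrm{dist}^2(F,\mathcal F(x))\ge\tfrac14|F|^2 .
\end{equation*}
In other words, wherever $|\nabla u|$ exceeds the $u$--independent threshold $2c_0$, the integrand of $I(u)$ already controls $|\nabla u|^2$; this is precisely what allows cutting $\nabla u$ off at a universal height without increasing the energy by more than a fixed factor.

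Next I would invoke the standard Lipschitz truncation for Sobolev maps, applied on $\mathcal U$: for $\lambda>0$ there is $\bar u\in W^{1,\infty}(\mathcal U,\mathbb R^n)$ with $\|\nabla\bar u\|_{L^\infty}\le c_n\lambda$, coinciding with $u$ off the set $\{\mathcal M(\nabla u)>\lambda\}$ ($\mathcal M$ the Hardy--Littlewood maximal operator), and satisfying
\begin{equation*}
\|\nabla u-\nabla\bar u\|_{L^2(\mathcal U)}^2\le c_n\int_{\{|\nabla u|>\lambda/2\}}|\nabla u|^2\,\mathrm dx,
\end{equation*}
the latter bound following from the weak--type estimate for $\mathcal M$ together with $|\nabla u|\le\lambda/2$ off $\{|\nabla u|>\lambda/2\}$. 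Choosing $\lambda$ to be a suitable universal multiple of $c_0$, in particular with $\lambda/2\ge2c_0$, and combining with the implication above, one gets $\|\nabla u-\nabla\bar u\|_{L^2}^2\le 4c_n\int_{\{|\nabla u|>\lambda/2\}}\mathrm{dist}^2(\nabla u,\mathcal F(x))\le C\,I(u)$, while $M:=c_n\lambda$ depends only on $\|g\|_{L^\infty}$. Tracking the constants in this last step, being slightly generous in the choice of $\lambda$, yields the stated factor $4$.

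For $I(\bar u)$ I would split $\mathcal U$ into the good set $\{\nabla\bar u=\nabla u\}$, where the integrand of $I(\bar u)$ equals that of $I(u)$ and thus contributes at most $I(u)$, and its complement, where I use $\mathrm{dist}(\nabla\bar u,\mathcal F(x))\le\mathrm{dist}(\nabla u,\mathcal F(x))+|\nabla u-\nabla\bar u|$, hence $\mathrm{dist}^2(\nabla\bar u,\mathcal F(x))\le 2\,\mathrm{dist}^2(\nabla u,\mathcal F(x))+2|\nabla u-\nabla\bar u|^2$. Integrating and inserting the $L^2$ bound gives $I(\bar u)\le 3I(u)+2\|\nabla u-\nabla\bar u\|_{L^2}^2$; optimizing the truncation level (where one also exploits that $|\nabla u|$ is large, in the $L^2$ sense, on the bad set) brings this to the asserted $10\,I(u)$.

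The only genuine obstacle is producing the Lipschitz truncation with constants that are honestly \emph{independent of $u$}: the textbook statement lives on $\mathbb R^n$, so one must either run the McShane/Whitney extension directly on the open set $\mathcal U$ (the restriction of $u$ to $\{\mathcal M(\nabla u)\le\lambda\}$ is already $c_n\lambda$--Lipschitz, via the difference--quotient bound by the maximal function, and is then extended) or first extend $u$ to $\mathbb R^n$. Either way the crucial point is that the cut--off level $\lambda$, and therefore $M$, is pinned to $\|g\|_{L^\infty}$ by the geometric observation of the first step, not to the size of $u$ or of $I(u)$.
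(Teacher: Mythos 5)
Your proof is correct and follows essentially the same route as the paper's. The paper likewise pins the cut-off level $\lambda$ by the same geometric observation — the wells $\mathcal F(x)$ are uniformly bounded by a constant depending only on $\|g\|_{L^\infty}$, so one may choose $\lambda$ with the property that $|F|\ge\lambda$ forces $|F|\le 2\,\mathrm{dist}(F,\mathcal F(x))$ — then invokes the Lipschitz truncation of Proposition A.1 in Friesecke--James--M\"uller (which you re-derive via the maximal function and McShane extension) and concludes the $I(\bar u)$ estimate by the triangle inequality exactly as you do.
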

\begin{proof}
Use the approximation  result of Proposition A.1.
in \cite{FJMgeo} to obtain the truncation $\bar u = u^\lambda$, for $\lambda>0$
having the property that if a matrix $F\in \mathbb{R}^{n\times n}$ satisfies
$|F|\geq\lambda$ then:
$$|F|\leq 2\mbox{dist}^2(F,\mathcal{F}(x)) \qquad \forall x\in\mathcal{U}.$$
Then $\|\nabla u^\lambda\|_{L^\infty}\leq C\lambda:= M$ and further:
$$\|\nabla u - \nabla u^\lambda\|_{L^2}^2 \leq \int_{\{|\nabla u|>\lambda\}}
|\nabla u|^2 \leq 4 \int_{\{|\nabla u|>\lambda\}} 
\mbox{dist}^2(\nabla u, \mathcal{F}(x))~\mbox{d}x \leq 4 I(u).$$
The last inequality of the lemma follows from the above by triangle inequality.
\end{proof}

\begin{lemma}\label{lem2.4}
Let $u\in W^{1,\infty}(\mathcal{U},\mathbb{R}^n)$ and define vector field $w$
whose each component $w^m$ satisfies:
$$\Delta_g w^m = 0 \quad \mbox{ in } \mathcal{U}, \qquad
w^m = u^m \quad \mbox{ on } \partial\mathcal{U}.$$
Then $\|\nabla (u-w)\|_{L^2(\mathcal{U})}^2\leq C I(u)$, where the constant
$C$ depends only on the coercivity constant of $g$ and (in a nondecreasing
manner) on $\|\nabla u\|_{L^\infty}$.
\end{lemma}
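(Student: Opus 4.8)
The plan is to use that $w=(w^1,\dots,w^n)$ is, componentwise, the $g$--harmonic extension of the boundary trace of $u$, so that $\phi:=u-w$ has each component $\phi^m\in W^{1,2}_0(\mathcal U)$, and the weak form of $\Delta_g w^m=0$ reads $\int_{\mathcal U}\sqrt{|g|}\,g^{ij}\partial_iw^m\,\partial_j\psi\,\mathrm dx=0$ for all $\psi\in W^{1,2}_0(\mathcal U)$. Testing with $\psi=\phi^m$, writing $\partial_i\phi^m=\partial_iu^m-\partial_iw^m$, and summing over $m$, we get
$$\sum_{m=1}^n\int_{\mathcal U}\sqrt{|g|}\,g^{ij}\partial_i\phi^m\,\partial_j\phi^m\,\mathrm dx=\sum_{m=1}^n\int_{\mathcal U}\sqrt{|g|}\,g^{ij}\partial_iu^m\,\partial_j\phi^m\,\mathrm dx=\int_{\mathcal U}\Big(\sqrt{|g|}\,(\nabla u)\,g^{-1}\Big):\nabla\phi\,\mathrm dx.$$
By uniform positive definiteness of $g$ (and the resulting lower bound $\sqrt{|g|}\ge c_0>0$) the left--hand side is bounded below by $c\,\|\nabla\phi\|_{L^2(\mathcal U)}^2$ with $c>0$, so it remains to bound the last integral by $C\,I(u)^{1/2}\|\nabla\phi\|_{L^2(\mathcal U)}$.

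First I would subtract from $\sqrt{|g|}\,(\nabla u)\,g^{-1}$ a divergence--free field. By Piola's identity $\mbox{div}\,\big(\mbox{cof}\,\nabla u\big)=0$ in $\mathcal D'(\mathcal U)$ — legitimate here because $u\in W^{1,\infty}(\mathcal U,\mathbb R^n)$, as seen by mollifying $u$ and passing to the limit in $W^{1,p}$ — and since $\phi\in W^{1,2}_0(\mathcal U,\mathbb R^n)$ this gives $\int_{\mathcal U}\big(\mbox{cof}\,\nabla u\big):\nabla\phi\,\mathrm dx=0$. Hence the integral to estimate equals $\int_{\mathcal U}\big(\sqrt{|g|}\,(\nabla u)\,g^{-1}-\mbox{cof}\,\nabla u\big):\nabla\phi\,\mathrm dx$, and the key point becomes the pointwise bound
$$\Big|\sqrt{|g(x)|}\,F\,g(x)^{-1}-\mbox{cof}\,F\Big|\le C\,\mbox{dist}\big(F,\mathcal F(x)\big)\qquad\mbox{for all }x\in\mathcal U\ \mbox{and all }|F|\le\|\nabla u\|_{L^\infty},$$
with $C$ depending on $\|\nabla u\|_{L^\infty}$ in a nondecreasing manner and on the ellipticity constants of $g$. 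This follows from the algebraic identity $\sqrt{|g|}\,(RA)\,g^{-1}=\mbox{cof}(RA)$, valid for every $R\in SO(n)$ — indeed $\mbox{cof}(RA)=\det(RA)\,(RA)^{-T}=\sqrt{|g|}\,RA^{-1}=\sqrt{|g|}\,(RA)\,A^{-2}$ while $g^{-1}=A^{-2}$ — so that the smooth matrix field $F\mapsto\sqrt{|g|}\,Fg^{-1}-\mbox{cof}\,F$ vanishes identically on the compact set $\mathcal F(x)=SO(n)A(x)$; comparing $F$ with a nearest point $P\in\mathcal F(x)$, which lies in a ball of radius controlled by $\|\nabla u\|_{L^\infty}$ and the ellipticity constants of $g$, and using that both $F\mapsto\mbox{cof}\,F$ and $F\mapsto\sqrt{|g|}\,Fg^{-1}$ are Lipschitz on that ball, yields the bound.

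Applying this estimate at $F=\nabla u(x)$ (note $|\nabla u(x)|\le\|\nabla u\|_{L^\infty}$ a.e.) and the Cauchy--Schwarz inequality gives
$$\left|\int_{\mathcal U}\big(\sqrt{|g|}\,(\nabla u)\,g^{-1}-\mbox{cof}\,\nabla u\big):\nabla\phi\,\mathrm dx\right|\le C\,\big\|\mbox{dist}(\nabla u,\mathcal F(\cdot))\big\|_{L^2(\mathcal U)}\,\|\nabla\phi\|_{L^2(\mathcal U)}=C\,I(u)^{1/2}\,\|\nabla\phi\|_{L^2(\mathcal U)},$$
which combined with the lower bound for the left--hand side of the first display yields $\|\nabla\phi\|_{L^2(\mathcal U)}^2\le C\,I(u)$, i.e.\ the assertion. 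The hard part is the second paragraph: identifying $\mbox{cof}\,\nabla u$ (via Piola's identity) as the right divergence--free object to subtract, and verifying the algebraic identity $\sqrt{|g|}\,Fg^{-1}=\mbox{cof}\,F$ on $\mathcal F(x)$; the justification of Piola's identity at $W^{1,\infty}$ regularity and the tracking of the dependence of $C$ on $\|\nabla u\|_{L^\infty}$ are routine but need a little care.
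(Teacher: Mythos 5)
Your proof is correct and follows essentially the same route as the paper: test the weak form of $\Delta_g w^m=0$ with $\phi^m=u^m-w^m$, subtract the divergence-free $\mathrm{cof}\,\nabla u$ via Piola's identity, observe that $F\mapsto\sqrt{|g|}\,Fg^{-1}-\mathrm{cof}\,F$ vanishes on $\mathcal F(x)$ and is Lipschitz on a ball of radius $\|\nabla u\|_{L^\infty}$, then conclude by Cauchy--Schwarz and coercivity. The only cosmetic difference is that you sum over $m$ at the outset while the paper estimates componentwise; both lead to the same bound.
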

\begin{proof}
The unique solvability of the elliptic problem in the statement follows 
by the usual Lax-Milgram and compactness arguments. 
Further, the correction $z=u-w\in W^{1,2}_0(\mathcal{U},\mathbb{R}^n)$ satisfies:
$$ \int_\mathcal{U} g^{ij}\sqrt{|g|} \partial_i z^m \partial_j\phi
= \int_{\mathcal{U}} g^{ij}\sqrt{|g|} \partial_i u^m\partial_j\phi
- \int_\mathcal{U} \nabla\phi (\mbox{cof } \nabla u)_{m-\mbox{th row}}$$
for all $\phi\in W^{1,2}_0(\mathcal U)$.
Indeed, the last term above 
equals to $0$, since the row-wise divergence of the cofactor
matrix of $\nabla u$ is $0$, in view of $u$ being Lipschitz continuous.
Use now $\phi=z^m$ to obtain:
\begin{equation}\label{glupie3}
\begin{split}
\int_{\mathcal{U}}\sqrt{|g|} |\nabla z^m|_g^2 & =
\int_{\mathcal{U}}\sqrt{|g|} |\nabla_g z^m|_g^2\\
&= \int_{\mathcal{U}} \partial_j z^m \left(g^{ij}\sqrt{|g|}\partial_i u^m 
- (\mbox{cof } \nabla u)_{mj}\right)\\
&\leq \|\nabla z^m\|_{L^2(\mathcal{U})} \left(\int_{\mathcal{U}}\left|\sqrt{|g|} (\nabla u)
g^{-1} - \mbox{cof }\nabla u \right|^2\right)^{1/2}\\
&\leq C_M \|\nabla z^m\|_{L^2(\mathcal{U})} I(u)^{1/2},
\end{split}
\end{equation}
which proves the lemma.

In order to deduce the last bound in (\ref{glupie3}), consider the function 
$f(F) = \sqrt{|g|} F g^{-1} - \mbox{cof } F$, which is locally Lipschitz
continuous, uniformly in $x\in\mathcal{U}$. 
Clearly, when $F\in \mathcal{F}(x)$ then $F=RA$ for some $R\in SO(n)$, and so:
$\mbox{cof } F =  \mbox{cof } (RA) = (\det A) R A^{-1} = \sqrt{|g|} (RA) g^{-1}$,
implying: $f(F) = 0$. Hence:
$$|f(\nabla u(x))|^2\leq C_M^2 \mbox{dist}^2(\nabla u(x), \mathcal{F}(x)),$$
where $C_M$ stands for the Lipschitz constant of $f$ 
on a sufficiently large ball, 
whose radius is determined by the bound $M = \|\nabla u\|_{L^\infty}$.
\end{proof}

\medskip

\noindent {\bf Proof of Theorem \ref{thbound_below}. }
\noindent We argue by contradiction, assuming that 
for some sequence of deformations
$u_n\in W^{1,2}(\mathcal{U}, \mathbb{R}^n)$, there holds 
$\lim_{n\to\infty} I(u_n) = 0$.
By Lemma \ref{lem2.3}, replacing $u_n$ by $\bar u_n$, 
we may also and without loss of generality have
$u_n\in W^{1,\infty}(\mathcal{U}, \mathbb{R}^n)$ and 
$\|\nabla u_n\|_{L^\infty}\leq M$.

Clearly, the uniform boundedness of $\nabla u_n$ implies, 
via the Poincar\'{e} inequality, after a 
modification by a constant and after passing to a subsequence if necessary:
\begin{equation}\label{glupie2.5}
\lim u_n = u \qquad \mbox{ weakly in } W^{1,2}(\mathcal{U}).
\end{equation}
Consider the splitting $u_n=w_n+z_n$ as in Lemma \ref{lem2.4}.
By the Poincar\'e inequality, Lemma \ref{lem2.4} implies 
that the sequence $z_n\in W_0^{1,2}(\mathcal{U})$ converges to $0$:
\begin{equation*}
\lim z_n = 0 \qquad \mbox{ strongly in } W^{1,2}(\mathcal{U}).
\end{equation*}
In view of the convergence in (\ref{glupie2.5}),  the sequence $w_n$ must
be uniformly bounded in $W^{1,2}(\mathcal{U})$, and hence by the local elliptic
estimates for the Laplace-Beltrami operator, each $\Delta_g$-harmonic component 
$w_n^m$ is locally uniformly bounded in a higher Sobolev norm:
$$\forall \mathcal{U}'\subset\subset \mathcal{U} \quad \exists C_{\mathcal{U}'} 
\qquad \|w_n^m\|_{W^{2,2}(\mathcal{U}')} \leq C_{\mathcal{U}'} 
\|w_n^m\|_{W^{1,2}(\mathcal{U})}\leq C.$$
Consequently, $w_n$ converge to $u$ strongly in $W_{loc}^{1,2}(\mathcal{U})$
and recalling that $I(u_n)$ converge to $0$, we finally obtain:
$$ I(u) = 0.$$
Therefore $\nabla u\in\mathcal{F}(x)$ for a.a. $x\in\mathcal{U}$,
which achieves the desired contradiction with the assumption 
$R\not\equiv 0$, by Lemma \ref{thsmoothandRicci}.
\endproof

\section{A geometric rigidity estimate for Riemannian metrics - a proof
of Theorem \ref{rigidity_main}}\label{secrigidity}

Recall that according to the basic rigidity estimate \cite{FJMgeo}, for every
$v\in W^{1,2}(\mathcal{V},\mathbb{R}^n)$ defined on an open, bounded set 
$\mathcal{V}\subset\mathbb{R}^n$, there exists $R\in SO(n)$ such that
\begin{equation}\label{basrig} 
\int_{\mathcal{V}} |\nabla v - R|^2 \leq C_{\mathcal{V}} 
\int_{\mathcal{V}} \mbox{dist}^2(\nabla v, SO(n)). 
\end{equation}
The constant $C_{\mathcal{V}}$ depends only on the domain $\mathcal{V}$
and it is uniform for a family of domains which are bilipschitz equivalent
with controlled Lipschitz constants.

\medskip

\noindent {\bf A proof of Theorem \ref{rigidity_main}.}
For some $x_0\in \mathcal{U}$ denote $A_0=A(x_0)$ and apply (\ref{basrig}) 
to the vector field $v(y) = u(A^{-1}_0y)\in W^{1,2}(A_0\mathcal{U}, \mathbb{R}^n)$.
After change of variables we obtain:
$$ \exists R\in SO(n) \qquad
\int_\mathcal{U} |(\nabla u) A^{-1}_0 - R|^2 \leq C_{A_0\mathcal{U}} 
\int_{\mathcal{U}} \mbox{dist}^2((\nabla u) A_0^{-1}, SO(n)). $$
Since the set $A_0\mathcal{U}$ is a bilipschitz image of $\mathcal{U}$, the constant
$C_{A_0\mathcal{U}}$ has a uniform bound $C$ 
depending on $\|A_0\|$, $\|A_0^{-1}\|$ and $\mathcal{U}$.  Further:
\begin{equation*}
\begin{split}
\int_\mathcal{U} |\nabla u - RA_0|^2 &\leq C\|A_0\|^4  
\int_{\mathcal{U}} \mbox{dist}^2(\nabla u, SO(n) A_0)\\
&\leq  C \|A_0\|^4 \left(
\int_{\mathcal{U}} \mbox{dist}^2(\nabla u, \mathcal{F}(x))
+ \int_{\mathcal{U}} |A(x) - A_0|^2\right)\\
&\leq  C \|g\|_{L^\infty}^2 
\left(\int_{\mathcal{U}} \mbox{dist}^2(\nabla u, \mathcal{F}(x))~\mbox{d}x
+ C \|\nabla g\|_{L^\infty}^2 (\mbox{diam } \mathcal{U})^2 |\mathcal{U}|\right),
\end{split}
\end{equation*}
which proves the claim.
\endproof

\medskip

We now derive a crucial approximation result, as in
Theorem 10 \cite{FJMhier} (see also Lemma 8.1 \cite{lemopa1}).

\begin{lemma}\label{lemapprox}
There exist matrix fields $Q^h\in W^{1,2}(\Omega,\mathbb{R}^{3\times 3})$ such that:
\begin{equation}\label{pr2} 
\frac{1}{h}\int_{\Omega^h}|\nabla u^h(x) - Q^h(x')|^2~\mathrm{d}x
\leq C (h^2 + I^h(u^h)),
\end{equation}
\begin{equation}\label{pr3} 
\int_\Omega |\nabla Q^h|^2 \leq C(1 + h^{-2}I^h(u^h)),
\end{equation}
with constant $C$ independent of $h$.
\end{lemma}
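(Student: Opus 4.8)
The plan is to mimic the partition-of-unity argument used in Theorem~10 of \cite{FJMhier}, with the geometric rigidity estimate of Theorem~\ref{rigidity_main} replacing the Euclidean rigidity estimate \eqref{basrig} at the local scale. First I would cover the midplate $\Omega$ (up to a negligible boundary layer) by a grid of squares of side comparable to $h$, so that each cube $\Omega^h_{x'} = (\text{square})\times(-h/2,h/2)$ is bilipschitz equivalent to a fixed cube with controlled constants. On each such cube I would apply Theorem~\ref{rigidity_main} to $u^h$; this produces a constant matrix $Q^h_{x'}$ with
$$\int_{\Omega^h_{x'}} |\nabla u^h - Q^h_{x'}|^2 \leq C\Big( \int_{\Omega^h_{x'}} \mathrm{dist}^2(\nabla u^h, \mathcal{F}(x)) + \|\nabla g\|_{L^\infty}^2 (\mathrm{diam}\,\Omega^h_{x'})^2 |\Omega^h_{x'}|\Big).$$
Since $\mathrm{diam}\,\Omega^h_{x'} \sim h$ and $|\Omega^h_{x'}| \sim h^3$, the extra non-Euclidean term on each cube is of order $h^5$, and summing over the $\sim h^{-2}$ cubes gives a total contribution of order $h^3$; after dividing by $h$ this is exactly the $Ch^2$ appearing in \eqref{pr2}. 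The frame-invariance and growth hypotheses (i),(iii) let me replace $\mathrm{dist}^2(\nabla u^h,\mathcal{F})$ by $I^h(u^h)$ up to constants.

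Next I would glue the piecewise-constant matrices into a single $W^{1,2}$ field. Following \cite{FJMhier}, I would mollify: define $Q^h(x') = \sum_{x'} \theta_{x'}(x') \, Q^h_{x'}$ using a smooth partition of unity $\{\theta_{x'}\}$ subordinate to a slightly enlarged cover, each $\theta_{x'}$ supported on a patch of size $\sim h$ with $|\nabla \theta_{x'}| \lesssim h^{-1}$. The key point for \eqref{pr2} is that on each patch $Q^h$ is a convex combination of the finitely many neighbouring $Q^h_{x'}$'s, so $|\nabla u^h - Q^h|^2$ is controlled by the sum of the $|\nabla u^h - Q^h_{x'}|^2$ over neighbouring cubes, which is again summable to the right-hand side of \eqref{pr2}. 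For \eqref{pr3}, I would estimate $|\nabla Q^h(x')| \lesssim h^{-1} \max_{\text{neighbours}} |Q^h_{x'} - Q^h_{x''}|$; the difference of rigidity matrices on adjacent cubes is bounded, via the triangle inequality through a common overlap cube and two more applications of the local estimate, by $h^{-3/2}$ times the square root of the local energy plus the $h^{5/2}$ geometric error. Squaring, multiplying by $|\nabla\theta|^2 \sim h^{-2}$, integrating over a patch of volume $\sim h^2$, and summing over all $\sim h^{-2}$ patches yields a bound of the form $C(1 + h^{-2} I^h(u^h))$, which is \eqref{pr3}; the ``$1$'' comes precisely from accumulating the $h$-scale geometric error terms.

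The main obstacle, and the one requiring care, is the bookkeeping of the non-Euclidean error term through the gluing: in \cite{FJMhier} the adjacent rigidity rotations differ only because of elastic energy, whereas here there is an additional deficit forced by $\nabla g$ on each cube, and one must verify that after dividing differences by $h$ (as demanded by the mollification) and re-summing, this term contributes only an $O(1)$ amount to \eqref{pr3} rather than blowing up. This works out because the geometric error is superlinear in $h$ relative to the cube volume, but it is the step where the specific powers of $h$ must be tracked precisely, and where the uniformity statement in Theorem~\ref{rigidity_main} (uniform constants over the family of $h$-cubes, all bilipschitz to a fixed reference cube) is essential. A secondary technical point is handling the boundary layer near $\partial\Omega$, which I would treat by a standard reflection or by absorbing it into the constants, since its measure is $O(h)$ relative to $\Omega$ and thus does not affect the stated scalings.
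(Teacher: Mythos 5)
Your proposal is correct and follows essentially the same strategy as the paper: apply Theorem~\ref{rigidity_main} on a cover of $\Omega^h$ by cylinders of in-plane radius $\sim h$, exploit the $h^2$ decay of the non-Euclidean error term (since $(\mathrm{diam})^2\,|\mathcal{U}| \sim h^5$ per cylinder), and smooth at scale $h$ to produce a $W^{1,2}$ field with the stated bounds. The paper implements the smoothing by directly mollifying $\nabla u^h$ against a continuous family of $h$-scale kernels $\eta_{x'}$ centered at each $x'\in\Omega$, comparing $Q^h(x')$ pointwise to the local rigidity matrix $Q_{x',h}$ via Cauchy--Schwarz and using $\int\nabla_{x'}\eta_{x'}=0$ for the gradient estimate, rather than gluing the local rigidity matrices with a discrete partition of unity; this avoids the bookkeeping of differences between adjacent grid cells and the boundary-layer issue you flag, but produces exactly the scalings \eqref{pr2} and \eqref{pr3} you compute.
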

\begin{proof} 
Let $D_{x',h}=B(x',h)\cap\Omega$ be $2$d curvilinear discs
in $\Omega$ of radius $h$ and centered at a given $x'$. 
Let $B_{x',h}=D_{x',h}\times(-h/2, h/2)$ be the corresponding $3$d cylinders.
On each $B_{x',h}$ use Theorem \ref{rigidity_main} to obtain:
\begin{equation}\label{pr1}
\begin{split}
\int_{B_{x',h}} |\nabla u^h - Q_{x',h}|^2 &\leq C \left(\int_{B_{x',h}}\mbox{dist}^2
(\nabla u^h,\mathcal{F}(z))~\mbox{d}z 
+ h^2 |B_{x',h}|\right)\\
&\leq C \int_{B_{x',h}} h^2 + \mbox{dist}^2 (\nabla u^h,\mathcal{F}(z))~\mbox{d}z,
\end{split}
\end{equation}
with a universal constant $C$ in the right hand side above, depending only on 
the metric $g$ and the Lipschitz constant of $\partial\Omega$.

Consider now the family of mollifiers 
$\eta_{x'}:\Omega\longrightarrow \mathbb{R}$,
parametrized by $x'\in\Omega$ and given by:
$$\eta_{x'}(z') 
= \frac{\theta(|z' - x'|/h)}{h\int_{\Omega}\theta(|y'-x'|/h)~\mbox{d}y'},$$
where $\theta\in\mathcal{C}_c^\infty([0,1))$ is a nonnegative cut-off function, 
equal to a nonzero constant in a neighborhood of $0$. Then $\eta_{x'}(z')=0$
for all $z'\not\in D_{x,h}$ and:
$$\int_\Omega \eta_{x'} = h^{-1},\quad
\|\eta_{x'}\|_{L^\infty}\leq Ch^{-3}, \quad 
\|\nabla_{x'}\eta_{x'}\|_{L^\infty}\leq Ch^{-4}.$$
Define the approximation $Q^h\in W^{1,2}(\Omega,\mathbb{R}^{3\times 3})$:
$$Q^h(x')= \int_{\Omega^h} \eta_{x'}(z') \nabla u^h(z)~\mbox{d}z.$$
By (\ref{pr1}) , we obtain the following pointwise estimates, for every $x'\in\Omega$:
\begin{equation*}
\begin{split}
|Q^h(x') - Q_{x',h}|^2 &\leq \left( \int_{\Omega^h}\eta_{x'}(z') 
\left(\nabla u^h(z) - Q_{x',h}\right)~\mbox{d}z\right)^2 \\
&\leq \int_{\Omega^h}|\eta_{x'}(z')|^2~\mbox{d}z \cdot
\int_{B_{x',h}} |\nabla u^h - Q_{x',h}|^2
\leq Ch^{-3} \int_{B_{x',h}} h^2 + \mbox{dist}^2(\nabla u^h,\mathcal{F}(z))
~\mbox{d}z,
\end{split}
\end{equation*}
\begin{equation*}
\begin{split}
|\nabla Q^h(x')|^2 & = \left(\int_{\Omega^h}(\nabla_{x'}\eta_{x'}(z'))
\nabla u^h(z)~\mbox{d}z\right)^2\\
& = \left(\int_{\Omega^h}(\nabla_{x'}\eta_{x'}(z'))
\left(\nabla u^h(z) - Q_{x',h}\right)~\mbox{d}z\right)^2
\leq \int_{\Omega^h}|\nabla_{x'}\eta_{x'}(z')|^2~\mbox{d}z \cdot
\int_{\Omega^h} |\nabla u^h - Q_{x',h}|^2\\
&\leq C h^{-5} \int_{B_{x',h}} h^2 + \mbox{dist}^2(\nabla u^h,\mathcal{F}(z))
~\mbox{d}z.
\end{split}
\end{equation*}
Applying the same estimate on doubled balls $B_{x', 2h}$ we arrive at:
\begin{equation*}
\begin{split}
\int_{B_{x',h}}|\nabla u^h(x) - Q^h(x')|^2~\mbox{d}x &\leq 
C \left( \int_{B_{x',h}}|\nabla u^h(z) - Q_{x',h}|^2~\mbox{d}z
+ \int_{B_{x',h}}|Q_{x',h} - Q^h(z')|^2~\mbox{d}z\right)^2\\
&\leq C \int_{B_{x',2h}} h^2 + \mbox{dist}^2(\nabla u^h,\mathcal{F}(z))~\mbox{d}z,
\end{split}
\end{equation*}
\begin{equation*}
\int_{D_{x',h}}|\nabla Q^h|^2 \leq 
Ch^{-3} \int_{B_{x',2h}} h^2 
+ \mbox{dist}^2(\nabla u^h,\mathcal{F}(z))~\mbox{d}z.
\end{equation*}
Consider a finite covering $\Omega = \bigcup D_{x',h}$  
whose intersection number is independent of $h$ (as it depends only on the 
Lipschitz constant of $\partial\Omega$). 
Summing the above bounds and applying the uniform lower bound 
$W(x,F) \geq c \mbox{ dist}^2(F,\mathcal{F}(x))$ directly yields 
(\ref{pr2}) and (\ref{pr3}).
\end{proof}

\section{Compactness and the lower bound on rescaled energies
- a proof of Theorem \ref{liminf}}\label{secliminf}

{\bf 1.} From (\ref{pr2}), (\ref{pr3}) in Lemma \ref{lemapprox}
and the assumption on the energy scaling, 
it follows directly that the sequence $Q^h$, obtained in Lemma \ref{lemapprox},
is bounded in $W^{1,2}(\Omega,\mathbb{R}^{3\times 3})$. 
Hence, $Q^h$ converges weakly in this space, to some matrix field $Q$ and:
\begin{equation*}
\int_{\Omega^1}|\nabla u^h(x',hx_3) - Q(x')|^2~\mbox{d}x
\leq \int_{\Omega}|Q^h - Q|^2 + h^{-1}\int_{\Omega^h}|\nabla u^h(x) - Q^h(x')|
~\mbox{d}x,
\end{equation*}
converges to $0$ by (\ref{pr2}).
Therefore we obtain the following convergence of the matrix field with 
given columns:
$$ \lim_{h\to 0}\Big[\partial_1 y^h(x), \partial_2 y^h(x),
h^{-1}\partial_3 y^h(x)\Big] = Q(x) \qquad \mbox{ in } 
L^2(\Omega^1,\mathbb{R}^{3\times 3}).$$
We immediately conclude that $\|\partial_3 y^h\|_{L^2(\Omega^1)}$ converges to $0$.

Now, setting $c^h = \fint_{\Omega^1} u^h(x',hx_3)~\mbox{d}x$, by means of 
the Poincar\'e inequality there follows the assertion (i) of the Theorem.
The higher regularity of $y$ can be deduced from $\nabla y \in W^{1,2}$,
in view of the established $W^{1,2}$
regularity of the limiting approximant $Q$.

To prove (ii), notice that by (\ref{pr2}),  (\ref{pr1}) and the lower bound on $W$:
\begin{equation}\label{pr4}
\int_\Omega\mbox{dist}^2(Q^h, \mathcal{F}(x')) \leq Ch^2.
\end{equation}
Hence $Q(x')\in\mathcal{F}(x)$ a.e. in $\Omega$ and consequently 
$\partial_\alpha y \cdot \partial_\beta y = g_{\alpha\beta}$. 
To see that the last column of the matrix field $Q$ coincides with the unit 
normal to the image surface: $Qe_3=\vec n$, write $Q=RA$ for some $R\in SO(3)$
and notice that:
$$\partial_1 y\times \partial_2 y = (RAe_1)\times (RAe_2)
= R((Ae_1)\times (Ae_2)) = c RAe_3 = cQe_3, $$
where $c= |\partial_1y \times\partial_2 y| = |(Ae_1)\times (Ae_2)|
= \det A >0$, by the form of the matrix $A$. 
On the other hand $|Qe_3| = |Re_3|=1$, so indeed there must be $Qe_3=\vec n
=  (\partial_1y \times\partial_2 y) / |\partial_1y \times\partial_2 y| $.

\medskip

{\bf 2.} We now modify the sequence $Q^h$ to retain its convergence properties
and additionally get $\tilde Q^h(x')\in\mathcal{F}(x)$ for a.a. $x\in \Omega$.
Define $\tilde Q^h\in L^2(\Omega,\mathbb{R}^3)$ with:
$$ \tilde Q^h(x') = \left\{\begin{array}{ll}
\pi_{\mathcal{F}(x)}(Q^h(x'))  & \mbox{ if } Q^h(x')\in 
\mbox{small ngbhood of } \mathcal{F}(x)\\
A(x) & \mbox{otherwise}\end{array}
\right.$$
where $\pi_{\mathcal{F}(x)}$ denotes the projection onto the compact set 
$\mathcal{F}(x)$ of its (sufficiently small) neighborhood.
One can easily see that:
$$\int_{\Omega}|\tilde Q^h - Q^h|^2\leq 
C\int_\Omega\mbox{dist}^2(Q^h(x'),\mathcal{F}(x'))~\mbox{d}x'\leq Ch^2$$
by (\ref{pr4}).  In particular, $\tilde Q^h$ converge to $Q$
in $L^2(\Omega)$.

Write $\tilde Q^h = R^hA$ for a matrix field $R^h\in SO(3)$ and consider
the rescaled strain:
$$ G^h (x',x_3) = \frac{1}{h} \Big((R^h)^T(x')\nabla u^h(x', hx_3) - A(x')\Big)
\in L^2(\Omega^1,\mathbb{R}^{3\times 3}).$$
We obtain:
$$\int_{\Omega^1}|G^h|^2 \leq Ch^{-3} \int_{\Omega^h} |\nabla u^h - \tilde Q^h|^2
\leq Ch^{-2}I^h(u^h) + Ch^{-2}\int_{\Omega}|\tilde Q^h - Q^h|^2 \leq C.$$
Hence there exists a limit:
\begin{equation}\label{pr4.5}
\lim_{h\to 0} G^h = G \qquad 
\mbox{ weakly in } L^2(\Omega^1,\mathbb{R}^{3\times 3}).
\end{equation}

\medskip

{\bf 3.} Fix now a small $s>0$ and consider the difference quotients:
$$f^{s,h}(x) = \frac{1}{h}\frac{1}{s} (y^h(x+se_3) - y^h(x)) 
\in W^{1,2}(\Omega^1,\mathbb{R}^3).$$
Since $h^{-1}\partial_3 y^h$ converges in $L^2(\Omega^1,\mathbb{R}^3)$
to $\vec n(x')$, then also:
$$\lim_{h\to 0}f^{s,h}(x) = \lim_{h\to 0}\frac{1}{h}\fint_0^s
\partial_3 y^h(x+te_3)~\mbox{d}t = \vec n(x').$$
There also follows convergence of normal derivatives, strongly in $L^1(\Omega^1)$:
$$\lim_{h\to 0}\partial_3 f^{s,h}(x) 
= \lim_{h\to 0} (\partial_3 y^h(x+se_3) - \partial_3y^h(x)) = 0,$$
and of tangential derivatives, weakly in $L^2(\Omega^1)$:
$$\lim_{h\to 0} \partial_\alpha f^{s,h}(x) =
\lim_{h\to 0} \frac{1}{s} R^h(x') \Big(G^h(x+se_3) - G^h(x)\Big)e_\alpha=
\frac{1}{s} (QA^{-1})(x') \Big(G(x+se_3) - G(x)\Big)e_\alpha, $$
where we have used that the $L^\infty$ sequence $R^h=\tilde Q^h A^{-1}$
converges in $L^2(\Omega)$ to $QA^{-1}\in SO(3)$.

Consequently, the sequence $f^{s,h}$ converges, as $h\to 0$ weakly in
$L^2(\Omega^1)$ to $\vec n(x')$. Equating the derivatives, we obtain:
$$\partial_\alpha \vec n(x') = \frac{1}{s} (QA^{-1})(x') 
\Big(G(x+se_3) - G(x)\Big)e_\alpha.$$
Therefore:
\begin{equation}\label{pr5}
G(x',x_3)e_\alpha = G(x',0) e_\alpha + x_3 \Big((AQ^{-1})(x') 
\partial_\alpha\vec n(x')\Big), \qquad \alpha=1,2.
\end{equation}

\medskip

{\bf 4.} We now calculate the lower bound of the rescaled energies.
To this end, define the sequence of characteristic functions:
$${\displaystyle \chi_h = \chi_{\{x\in\Omega^1; ~~|G^{h}(x)|\leq h^{-1/2}\}}},$$
which by (\ref{pr4.5}) converge in $L^1(\Omega^1)$ to $1$.
Using frame invariance and noting that in the Taylor expansion of the function 
$F\mapsto W(x,F)$ at $A(x)$ the first two terms are
$0$, we obtain:
\begin{equation*}
\begin{split}
\frac{1}{h^2} I^h(u^h) &\geq \frac{1}{h^2} \int_{\Omega^1} \chi_h(x)
W(x, \nabla u^h(x',hx_3))~\mbox{d}x\\
& = \frac{1}{h^2} \int_{\Omega^1} \chi_h(x) W (x, R^h(x')^T
\nabla u^h(x',hx_3))~\mbox{d}x\\
& = \frac{1}{h^2} \int_{\Omega^1} \chi_h(x) W (x, A(x)
+ hG^h(x))~\mbox{d}x\\
& \geq \int_{\Omega^1} \chi_h(x) \left[\frac{1}{2} \nabla^2 W
(x, \cdot)_{\mid A(x)} (G^h(x), G^h(x)) - 
{o}(1) |G^h(x)|^2\right]~\mbox{d}x.
\end{split}
\end{equation*}
Hence:
\begin{equation*}
\begin{split}
\liminf_{h\to 0}\frac{1}{h^2} I^h(u^h) &\geq \frac{1}{2} \liminf_{h\to 0}
\int_{\Omega^1} \chi_h(x)\mathcal{Q}_3(x')\Big(G^h(x)\Big)~\mbox{d}x  \\
& = \frac{1}{2} \liminf_{h\to 0}\int_{\Omega^1} 
\mathcal{Q}_3(x')\Big(\chi_h(x)G^h(x)\Big)~\mbox{d}x  \\
& \geq \frac{1}{2}\int_{\Omega^1} \mathcal{Q}_3(x')\Big(G(x)\Big)~\mbox{d}x
\geq \frac{1}{2}\int_{\Omega^1} \mathcal{Q}_2(x')\Big(G(x)_{tan}\Big)~\mbox{d}x.
\end{split}
\end{equation*}
Above, we used the fact that $\chi_hG^h$ converges weakly in 
$L^2(\Omega^1,\mathbb{R}^{3\times 3})$ to $G$ (compare with the convergence 
in (\ref{pr4.5}))
and the nonnegative definiteness of the quadratic forms 
$\mathcal{Q}_3(x')$, following from $A(x')$ being the minimizer 
of the mapping $W$, as above.

By (\ref{pr5}):
\begin{equation*}
\begin{split}
\mathcal{Q}_2(x')\Big(G(x',x_3)_{tan}\Big) = 
\mathcal{Q}_2(x')\Big(G(x',0)_{tan}\Big) & + 2 x_3 \mathcal{L}_2(x')
\Big( G(x',0)_{tan}, [AQ^{-1}\nabla \vec n]_{tan}(x')\Big)\\
& + x_3^2\mathcal{Q}_2(x')\Big([AQ^{-1}\nabla \vec n]_{tan}(x')\Big).
\end{split}
\end{equation*}
The second term above, expressed in terms of the bilinear operator 
$\mathcal{L}_2(x')$ representing the quadratic form $\mathcal{Q}_2(x')$, 
integrates to $0$ on the domain $\Omega^1$ symmetric in $x_3$.
After dropping the first nonnegative term, we arrive at:
\begin{equation*}
\liminf_{h\to 0}\frac{1}{h^2} I^h(u^h) \geq \frac{1}{24} \int_\Omega 
\mathcal{Q}_2(x') \Big([AQ^{-1}\nabla \vec n]_{tan}(x')\Big)~\mbox{d}x'.
\end{equation*}
This yields the formula in (iii), as $AQ^{-1} \in SO(3)$ so
$AQ^{-1} = (QA^{-1})^T = A^{-1}Q^T$. The proof of Theorem \ref{liminf}
is now complete.
\endproof

\section{Two lemmas on the quadratic forms $\mathcal{Q}_3$
and $\mathcal{Q}_2$}

We now gather some facts regarding the quadratic forms $\mathcal{Q}_3$
and $\mathcal{Q}_2$. First, it is easy to notice that the tangent space
to $\mathcal{F}(x)$ at $A(x)$ coincides with the products of the skew-symmetric
matrices with $A(x)$, denoted by $\mbox{skew}\cdot A(x)$.
Here '$\mbox{skew}$' stands for the space of all skew matrices
of appropriate dimension; in the present case $3\times 3$.
The orthogonal complement of this space equals to:
$$\Big(T_{A(x)}\mathcal{F}(x)\Big)^{\perp} = \mbox{sym}\cdot A(x)^{-1},$$
where '$\mbox{sym}$' denotes the space of all symmetric matrices 
(again, of appropriate dimension).
The quadratic, nonnegative definite form $\mathcal{Q}_3(x')$ is strictly
positive definite on the space above, and it
depends only on the projection of its argument on this space:
\begin{equation}\label{r1}
\mathcal{Q}_3(x')(F) 
= \mathcal{Q}_3(x')\left(\mathbb{P}_{\{\mbox{sym}\cdot A(x')^{-1}\}} F\right).
\end{equation}
\begin{lemma}\label{lem1}
We have:
$$ \mathbb{P}_{\{\mathrm{sym}\cdot A^{-1}\}} 
\left[\begin{array}{cc} \Big[F_{\alpha\beta}\Big]
& \begin{array}{c} f_{13}\\ f_{23}\end{array}\\
\begin{array}{cc} f_{31} & f_{32}\end{array} & f_{33} \end{array}\right]
= \left[\begin{array}{cc} 
\Big[\mathbb{P}_{\{\mathrm{sym}\cdot A_{\alpha\beta}^{-1}\}} F_{\alpha\beta}\Big]
& \begin{array}{c} b_{1}\\ b_{2}\end{array}\\
\left[\begin{array}{cc} b_{1} &  b_{2} \end{array}\right] A_{\alpha\beta}^{-1}
& b_{3} \end{array}\right]$$
with $b_{3} = -f_{33}$ and:
$\left[\begin{array}{cc}b_{1}& b_{2}\end{array}\right] 
\left(\mathrm{Id} + A_{\alpha\beta}^{-2}\right)
= - \left[\begin{array}{cc}f_{13}& f_{23}\end{array}\right] 
- \left[\begin{array}{cc}f_{31}& f_{32}\end{array}\right] 
A_{\alpha\beta}^{-1} $.
\end{lemma}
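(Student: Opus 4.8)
The plan is to read the projection off the orthogonal splitting that defines it. Recall from the paragraph preceding the statement that, with respect to the Frobenius inner product, $\mathbb{R}^{3\times 3} = (\mathrm{skew}\cdot A)\oplus(\mathrm{sym}\cdot A^{-1})$ is an \emph{orthogonal} direct sum: indeed $\langle SA^{-1},WA\rangle = \mathrm{tr}(A^{-1}SWA) = \mathrm{tr}(SW) = 0$ for $S$ symmetric and $W$ skew, while $\dim(\mathrm{skew}) + \dim(\mathrm{sym}) = 3 + 6 = 9$. Hence every $F\in\mathbb{R}^{3\times 3}$ has a unique decomposition $F = WA + SA^{-1}$ with $W$ skew and $S$ symmetric, and $\mathbb{P}_{\{\mathrm{sym}\cdot A^{-1}\}}F = SA^{-1}$. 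The same fact used in dimension two will handle the principal $2\times 2$ block. So the whole proof reduces to identifying $S$ (equivalently $WA$) and then right-multiplying by $A^{-1}$.

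First I would insert the block-diagonal forms $A = \mathrm{diag}(A_{\alpha\beta},1)$, $A^{-1} = \mathrm{diag}(A_{\alpha\beta}^{-1},1)$, and write, in the matching $2+1$ splitting,
$$W = \left[\begin{array}{cc} w & p\\ -p^T & 0\end{array}\right],\qquad S = \left[\begin{array}{cc} s & q\\ q^T & s_3\end{array}\right],\qquad p,q\in\mathbb{R}^2,\ w=-w^T,\ s=s^T\in\mathbb{R}^{2\times 2}.$$
Then $WA$ has $(1,1),(1,2),(2,1),(2,2)$ blocks $wA_{\alpha\beta}$, $p$, $-p^TA_{\alpha\beta}$, $0$, while $SA^{-1}$ has blocks $sA_{\alpha\beta}^{-1}$, $q$, $q^TA_{\alpha\beta}^{-1}$, $s_3$; comparing these against the corresponding blocks of $F$ yields four relations. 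The $(1,1)$ relation reads $[F_{\alpha\beta}] = wA_{\alpha\beta} + sA_{\alpha\beta}^{-1}$, which by the $2\times 2$ instance of the splitting says precisely that $sA_{\alpha\beta}^{-1} = \mathbb{P}_{\{\mathrm{sym}\cdot A_{\alpha\beta}^{-1}\}}[F_{\alpha\beta}]$, giving the top-left block of the answer. The two off-diagonal relations form a $2\times 2$ linear system in the vectors $p$ and $q$; eliminating $p$ leaves a single equation for $q$ with coefficient matrix $\mathrm{Id} + A_{\alpha\beta}^{-2}$, invertible because $A_{\alpha\beta}$ — hence $A_{\alpha\beta}^{-2}$ — is positive definite, yielding the stated linear relation for $[b_1,b_2] = q^T$. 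Finally the $(2,2)$ relation determines $s_3$ in terms of $f_{33}$. Reading the blocks of $SA^{-1}$ back off then produces the asserted matrix, with off-diagonal blocks $q$ and $q^TA_{\alpha\beta}^{-1}$ and $(2,2)$ entry $s_3=b_3$.

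There is no deep obstacle here: once orthogonality of the two subspaces is in hand, everything is elementary linear algebra. The one structural point to keep in mind is that it is \emph{right}-multiplication by $A$ (not left) that keeps $\mathrm{skew}\cdot A\perp\mathrm{sym}\cdot A^{-1}$ and, since $A$ is block-diagonal, makes the principal $2\times 2$ block decouple from the two vectors and the scalar; this is exactly what lets the answer inherit the $2+1$ block shape. The only delicate point is the bookkeeping of transposes in the off-diagonal system, so that the bottom-left block emerges as $q^TA_{\alpha\beta}^{-1}$ and the relation is written with $\mathrm{Id}+A_{\alpha\beta}^{-2}$ rather than $A_{\alpha\beta}^2+\mathrm{Id}$; it is prudent to verify the signs in the scalar and vector relations against the trivial case $A_{\alpha\beta} = \mathrm{Id}$, where $\mathbb{P}_{\{\mathrm{sym}\cdot A^{-1}\}}$ reduces to $F\mapsto\frac12(F+F^T)$.
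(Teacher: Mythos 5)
Your approach is sound and is essentially the same as the paper's: both exploit the orthogonality of $\mathrm{skew}\cdot A$ and $\mathrm{sym}\cdot A^{-1}$ under the Frobenius inner product (your computation $\mathrm{tr}(A^{-1}SWA)=\mathrm{tr}(SW)=0$ is exactly the point), and both read the answer off the resulting block-by-block system once the block-diagonal form of $A$ is inserted. You parametrize both summands $W$ and $S$ in $F=WA+SA^{-1}$, whereas the paper parametrizes only $B$ and imposes $FA^{-1}-BA^{-2}\in\mathrm{skew}$, but these are the same computation in different clothing; the paper also splits by linearity into the two cases $f_{ij}=0$ and $F_{\alpha\beta}=0$, which your block bookkeeping does implicitly.

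The gap is that you stop just short of the sign verification you yourself flag as ``delicate,'' and this is precisely where the lemma as printed goes wrong. Carry out the check you propose, $A_{\alpha\beta}=\mathrm{Id}$: the projection is $F\mapsto\frac12(F+F^T)$, whose $(3,3)$ entry is $f_{33}$ and whose $(1,3)$, $(2,3)$ entries are $\frac12(f_{13}+f_{31})$, $\frac12(f_{23}+f_{32})$, with no overall minus. Your own block equations give the same: the $(3,3)$ relation is $s_3=f_{33}$, hence $b_3=f_{33}$ (not $-f_{33}$); eliminating $p=f_{\cdot3}-q$ from $f_{3\cdot}=-p^TA_{\alpha\beta}+q^TA_{\alpha\beta}^{-1}$ gives
$\left[\begin{array}{cc}b_1&b_2\end{array}\right](\mathrm{Id}+A_{\alpha\beta}^{-2})=\left[\begin{array}{cc}f_{13}&f_{23}\end{array}\right]+\left[\begin{array}{cc}f_{31}&f_{32}\end{array}\right]A_{\alpha\beta}^{-1}$,
again without the minus signs. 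So both relations stated in the lemma carry a spurious sign. The source of the slip in the paper is in the second case of its proof, where $FA^{-1}-BA^{-2}$ is displayed as a sum of $FA^{-1}$ with $+BA^{-2}$ rather than the difference; the dropped minus propagates into the stated conditions on $b_i$. A compensating sign slip appears in Lemma 5.2, where $c$ is defined with an extra factor of $-1$, so the representation formula for $\mathcal{Q}_2$ and everything downstream are unaffected; but the lemma as stated is not what your (correct) computation proves, and you should record the corrected signs rather than defer the check.
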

\begin{proof}
Since the projection $\mathbb{P}$ is a linear operator, we will separately prove
the above formula in two cases: when $F_{\alpha\beta}=0$ and when $f_{ij}=0$.
Notice first that $\mathbb{P}_{\{\mathrm{sym}\cdot A^{-1}\}} F= BA^{-1}$, for a 
symmetric matrix $B$, uniquely determined through the formula:
$$\forall S\in\mbox{sym} \qquad 0 = (F-BA^{-1}): (SA^{-1}).$$
Since the right hand side above equals to $(FA^{-1} - BA^{-2}):S$,
we obtain:
\begin{equation}\label{a1}
FA^{-1} - BA^{-2} \in\mbox{skew}.
\end{equation}
Also, we notice the form of the matrix:
\begin{equation}\label{a2} 
B = \left[\begin{array}{cc} \Big[B_{\alpha\beta}\Big]
& \begin{array}{c} b_{1}\\ b_{2}\end{array}\\
\begin{array}{cc} b_{1} & b_{2}\end{array} & b_{3} \end{array}\right],
\qquad 
BA^{-1} = \left[\begin{array}{cc} 
\Big[B_{\alpha\beta} A_{\alpha\beta}^{-1}\Big]
& \begin{array}{c} b_{1}\\ b_{2}\end{array}\\
\left[\begin{array}{cc} b_{1} &  b_{2} \end{array}\right] A_{\alpha\beta}^{-1}
& b_{3} \end{array}\right]
\end{equation}

\medskip

In the first case when $f_{ij}=0$, let $B_{\alpha\beta} 
= \big[\mathbb{P}_{\{\mathrm{sym}\cdot A^{-1}\}} F_{\alpha\beta}\big] A_{\alpha\beta}$.
Then $F_{\alpha\beta} A_{\alpha\beta}^{-1} - B_{\alpha\beta}A_{\alpha\beta}^{-2}\in
\mbox{skew}$, and the same matrix provides the only non-zero, principal
$2\times 2$ minor of the $3\times 3$ matrix  
$F A^{-1} - BA^{-2}$, where $B$ is taken so that all $b_i=0$ and 
$B_{tan}=B_{\alpha\beta}$. By uniqueness of the symmetric matrix $B$ 
satisfying (\ref{a1}), this proves the claim.

\medskip

In the second case when $F_{\alpha\beta}=0$, define $B$ as in 
(\ref{a2}) with $B_{\alpha\beta}=0$. The result follows, since:
$$FA^{-1} - BA^{-2} = \left[\begin{array}{cc} 
\Big[ 0 \Big]
& \begin{array}{c} f_{13}\\ f_{23}\end{array}\\
\left[\begin{array}{cc} f_{31} &  f_{32} \end{array}\right] A_{\alpha\beta}^{-1}
& f_{33} \end{array}\right] +
\left[\begin{array}{cc} 
\Big[ 0 \Big]
& \begin{array}{c} b_{1}\\ b_{2}\end{array}\\
\left[\begin{array}{cc} b_{1} &  b_{2} \end{array}\right] A_{\alpha\beta}^{-2}
& b_{3} \end{array}\right],$$
and (\ref{a1}) is equivalent to the conditions on $b_i$ given in the statement 
of the lemma. We remark that since $A_{\alpha\beta}^{-2}= [g_{\alpha\beta}]^{-1}$
is strictly positive definite, then the same is true for the
matrix $\mbox{Id} + A_{\alpha\beta}^{-2}$, which implies its invertibility.
\end{proof}

Now, the quadratic and nonnegative definite form $\mathcal{Q}_2(x')$
is likewise strictly positive definite on the space
$\mbox{sym}\cdot A_{\alpha\beta}^{-1}$ and:
\begin{equation}\label{r2}
\mathcal{Q}_2(x')(F_{tan}) 
= \mathcal{Q}_2(x')\left(\mathbb{P}_{\{\mbox{sym}\cdot A(x')^{-1}\}} F_{tan}\right).
\end{equation}

\begin{lemma}\label{lem2}
There exists linear maps $b,c:\mathbb{R}^{2\times 2}\longrightarrow \mathbb{R}^3$
related by:
$$ c\left(F_{tan}\right) = -  
\left[\begin{array}{cc}\Big[\mathrm{Id} + A_{\alpha\beta}^{-2} \Big]
& \begin{array}{c} 0\\ 0\end{array}\\
\begin{array}{cc} 0 &  0 \end{array}
& 1 \end{array}\right]\cdot b\left(F_{tan}\right)$$
and such that:
\begin{equation}\label{rep}
\mathcal{Q}_2(x')(F_{tan}) = 
\mathcal{Q}_3(x')\left[\begin{array}{cc} 
\Big[\mathbb{P}_{\{\mathrm{sym}\cdot A_{\alpha\beta}^{-1}\}} F_{tan}\Big]
& \begin{array}{c} b_{1}\\ b_{2}\end{array}\\
\left[\begin{array}{cc} b_{1} &  b_{2} \end{array}\right] A_{\alpha\beta}^{-1}
& b_{3} \end{array}\right]
= \mathcal{Q}_3(x')\left[\begin{array}{cc} \Big[F_{tan}\Big]
& \begin{array}{c} c_{1}\\ c_{2}\end{array}\\
\begin{array}{cc} 0 &  0 \end{array} & c_{3} \end{array}\right].
\end{equation}
\end{lemma}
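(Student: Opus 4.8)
The plan is to establish the representation formula (\ref{rep}) by exploiting the structure of $\mathcal{Q}_3$ uncovered in equation (\ref{r1}) and Lemma \ref{lem1}. Recall that $\mathcal{Q}_2(x')(F_{tan})$ is defined as the minimum of $\mathcal{Q}_3(x')(\tilde F)$ over all $3\times 3$ matrices $\tilde F$ whose tangential $2\times 2$ minor equals $F_{tan}$. First I would observe that, by (\ref{r1}), the value $\mathcal{Q}_3(x')(\tilde F)$ depends only on $\mathbb{P}_{\{\mathrm{sym}\cdot A^{-1}\}}\tilde F$, so the minimization may be carried out over the image of the affine subspace $\{\tilde F_{tan} = F_{tan}\}$ under this projection. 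Using the explicit description of that projection from Lemma \ref{lem1}, a matrix $\tilde F$ with arbitrary entries $f_{13}, f_{23}, f_{31}, f_{32}, f_{33}$ (and fixed $\tilde F_{tan} = F_{tan}$) projects to a matrix of the form displayed in the middle of (\ref{rep}), where the tangential block is $\mathbb{P}_{\{\mathrm{sym}\cdot A_{\alpha\beta}^{-1}\}} F_{tan}$ (forced, independent of the free entries), and $b_1, b_2, b_3$ range — as the free entries vary — over \emph{all} of $\mathbb{R}^3$, since the linear system relating $(b_1,b_2)$ to $(f_{13},f_{23},f_{31},f_{32})$ in Lemma \ref{lem1} is solvable thanks to invertibility of $\mathrm{Id}+A_{\alpha\beta}^{-2}$, and $b_3 = -f_{33}$ is free.

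Consequently the minimization defining $\mathcal{Q}_2$ becomes an unconstrained minimization of the strictly positive definite quadratic form $\mathcal{Q}_3(x')$ restricted to $\mathrm{sym}\cdot A^{-1}$ over the three-dimensional set of matrices of the stated middle form as $(b_1,b_2,b_3)\in\mathbb{R}^3$ vary, with the tangential block pinned. Since $\mathcal{Q}_3(x')$ is strictly positive definite on $\mathrm{sym}\cdot A^{-1}$, this minimization problem has a unique minimizer, depending linearly on $F_{tan}$; I would define $b = b(F_{tan})$ to be the minimizing triple, which gives the first equality in (\ref{rep}). For the second equality, I note that by (\ref{r1}) the value of $\mathcal{Q}_3$ is unchanged if we replace the minimizing projected matrix by \emph{any} matrix with the same $\mathbb{P}_{\{\mathrm{sym}\cdot A^{-1}\}}$-projection; in particular by a matrix of the cleaner shape appearing on the right of (\ref{rep}), namely with tangential block $F_{tan}$ itself (not its symmetric-type projection), zeros in the $(3,1),(3,2)$ slots, and entries $c_1,c_2,c_3$ in the last column. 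I would then invoke Lemma \ref{lem1} once more to compute the projection of this right-hand matrix and match it to the projection of the middle matrix; equating the two projections (both of which must be the unique minimizer) yields the linear relation between $c = (c_1,c_2,c_3)$ and $b = (b_1,b_2,b_3)$, which upon solving — again using invertibility of the relevant matrices — produces exactly the stated formula $c(F_{tan}) = -\,\mathrm{diag}(\mathrm{Id}+A_{\alpha\beta}^{-2},\,1)\cdot b(F_{tan})$.

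The one subtlety I anticipate, and the step I would be most careful with, is verifying that the middle and right-hand matrices in (\ref{rep}) genuinely have the same image under $\mathbb{P}_{\{\mathrm{sym}\cdot A^{-1}\}}$, i.e. that passing from the ``$b$-parametrization'' to the ``$c$-parametrization'' is a legitimate change of representative rather than a change of the underlying minimization problem. This amounts to a direct bookkeeping computation with Lemma \ref{lem1}: applying the projection formula to the right-hand matrix (which has $f_{31}=f_{32}=0$ and last column $(c_1,c_2,c_3)$) gives a projected matrix whose off-diagonal lower/upper entries are determined by $(\mathrm{Id}+A_{\alpha\beta}^{-2})^{-1}(-c_1,-c_2)$ and whose $(3,3)$ entry is $-c_3$; setting these equal to the corresponding data $(b_1,b_2)$ and $b_3$ of the middle matrix's projection, and rearranging, gives $(b_1,b_2) = -(\mathrm{Id}+A_{\alpha\beta}^{-2})^{-1}(c_1,c_2)$ and $b_3 = -c_3$, equivalently $(c_1,c_2)= -(\mathrm{Id}+A_{\alpha\beta}^{-2})(b_1,b_2)$ and $c_3 = -b_3$, which is precisely the asserted relation. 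Everything else is soft: existence and uniqueness of the minimizer from strict positive-definiteness, linearity of $b$ and $c$ in $F_{tan}$ from linearity of the projection and of the normal equations, and the equality of the three expressions in (\ref{rep}) from (\ref{r1}) together with the minimization definition of $\mathcal{Q}_2$.
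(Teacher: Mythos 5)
Your proof is correct and follows essentially the same route as the paper's: use (\ref{r1}) to reduce the minimization defining $\mathcal{Q}_2$ to a minimization over the projected image, invoke Lemma~\ref{lem1} to identify that image as the three--parameter family of matrices of the middle form in (\ref{rep}) with $(b_1,b_2,b_3)$ free, take the unique minimizer to define $b$, and then apply Lemma~\ref{lem1} once more to the right-hand matrix to read off the relation between $c$ and $b$. You supply somewhat more detail than the paper does (in particular, the observation that $(b_1,b_2,b_3)$ sweeps out all of $\mathbb{R}^3$ because the linear map in Lemma~\ref{lem1} from the off-tangential entries to $b$ is onto, via invertibility of $\mathrm{Id}+A_{\alpha\beta}^{-2}$), but the ingredients and the logic coincide with the paper's argument.
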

\begin{proof}
By (\ref{r1}), Lemma \ref{lem1} and the definition of $\mathcal{Q}_2$ 
it follows that:
$$\mathcal{Q}_2(x')(F_{tan}) = \min_{b\in\mathbb{R}^3}
\mathcal{Q}_3(x')\left[\begin{array}{cc} 
\Big[\mathbb{P}_{\{\mathrm{sym}\cdot A_{\alpha\beta}^{-1}\}} F_{tan}\Big]
& \begin{array}{c} b_{1}\\ b_{2}\end{array}\\
\left[\begin{array}{cc} b_{1} &  b_{2} \end{array}\right] A_{\alpha\beta}^{-1}
& b_{3} \end{array}\right].$$
Hence we obtain the first equality in the representation (\ref{rep}).
The second one follows again from (\ref{r1}) and Lemma \ref{lem1}, provided that
$c_{3} = -b_{3}$ and
$\left[\begin{array}{cc}c_{1}& c_{2}\end{array}\right]
= - \left[\begin{array}{cc}b_{1}& b_{2}\end{array}\right] 
\left(\mathrm{Id} + A_{\alpha\beta}^{-2}\right)$,
which is exactly the condition defining the vector $c$
in the statement of the lemma.
\end{proof}

\section{The recovery sequence - a proof of Theorem \ref{threcseq}}

Following the reasoning in step 1 of the proof of Theorem \ref{liminf}, 
we first notice that the matrix field $Q$ whose columns are given by:
$$Q(x') = \Big[\partial_1 y(x'), \partial_2 y(x'), \vec n(x')\Big]
\in \mathcal{F}(x').$$
Hence in particular: $QA^{-1}\in SO(3)$.
With the help of the above definition and Lemma \ref{lem2}, we put:  
\begin{equation}\label{ddef}
d(x') = Q(x')A^{-1}(x')\cdot c\Big(A_{\alpha\beta}^{-1}(\nabla y)^T
\nabla \vec n(x')\Big)\in L^2(\Omega, \mathbb{R}^3).
\end{equation}
Let $d^h\in W^{1,\infty}(\Omega,\mathbb{R}^3)$ be such that:
\begin{equation}\label{approxd}
\lim_{h\to 0} d^h = d \quad \mbox{in } L^2(\Omega) \qquad
\mbox{ and } \qquad
\lim_{h\to 0} h\| d^h\|_{W^{1,\infty}}  =0.
\end{equation} 
Note that a sequence $d^h$ with properties (\ref{approxd}) can always be 
derived by reparametrizing (slowing down) a sequence of smooth approximations 
of the given vector field $d\in L^2(\Omega)$.

Recalling (\ref{normal}), we now approximate $y$ and $\vec n$ respectively by sequences 
$y^h\in W^{2,\infty}(\Omega,\mathbb{R}^3)$  and 
$\vec n^h \in W^{1,\infty} (\Omega, {\mathbb R}^3)$ such that:
\begin{equation}\label{yhapprox}
\begin{split}
& \lim_{h\to 0} \|y^h - y\|_{W^{2,2}(\Omega)} = 0, \qquad 
\lim_{h\to 0} \|\vec n^h - \vec n\|_{W^{1,2}(\Omega)} =0, \\ & 
h\Big ( \|y^h\|_{W^{2,\infty}(\Omega)} + \|\vec n^h\|_{W^{1,\infty}(\Omega)}\Big ) 
\leq \varepsilon_0,\\
&  \lim_{h\to 0}\frac{1}{h^2} \Big|\left\{x'\in \Omega; ~ y^h(x') \neq y(x')\right\} 
\cup \left\{x'\in \Omega; ~ \vec n^h(x') \neq \vec n(x')\right\}\Big| =0,
\end{split}
\end{equation}  
for a sufficiently small, fixed number $\varepsilon_0>0$, to be chosen later.
The existence of such approximation follows by partition of unity and a truncation
argument, as a special case of the Lusin-type result for Sobolev functions 
in \cite{Liu50} (see also Proposition 2 in \cite{FJMhier}).  

Define:
\begin{equation}\label{recseq}
u^h(x', x_3) = y^h(x') + x_3 \vec n^h(x') + \frac{x_3^2}{2} d^h(x').
\end{equation}
Note that each map: 
$\Omega\ni x'\mapsto \mbox{dist}(\nabla u^h(x'), \mathcal F(x'))$ 
vanishes on $\Omega_h$ and is 
Lipschitz in $\Omega$, with Lipschitz constant of order $O(1/h)$.
Here, we let:
$$\Omega_h= \left\{x'\in \Omega; ~ y^h(x') = y(x') \mbox{ and } 
\vec n^h(x') =\vec n(x') \right\}.$$ 
For any point $x'\in \Omega \setminus \Omega_h$, 
we also have $\mbox{dist}^2 (x', \Omega_h) \leq C |\Omega \setminus \Omega_h|$. 
The proof of the latter statement is standard, see for example \cite{lemopa1}, 
Lemma 6.1 for a similar argument. As a consequence, by (\ref{yhapprox}) we obtain 
$1/h^2 \mbox{dist}^2 (x', \Omega_h) \to 0$ and hence: 
\begin{equation}\label{bdd-distance}
\mbox{dist}(\nabla u^h(x'), \mathcal F(x')) \leq O(1/h) \mbox{dist}(x', \Omega_h)  
= o(1).
\end{equation}  
The gradient of the deformation $u^h$ is given by:
$$\nabla u^h(x',x_3) = Q^h(x') + x_3 A_2^h(x') + \frac {x_3^2}{2} D^h(x'), $$
where:  
\begin{equation*}
\begin{split}
&  Q^h(x') = Q(x') \quad \mbox{ in } \Omega_h, \qquad 
A_2^h(x') = \Big[\partial_1 \vec n^h(x'),\partial_2 \vec n^h(x'), d^h(x')\Big], \\ 
& \lim_{h\to 0} A_2^h = A_2 =  
\Big[\partial_1\vec n, \partial_2\vec n, d\Big] \quad \mbox{ in } L^2(\Omega), \\
& D^h = \Big [\partial_1 d^h, \partial_2 d^h, 0 \Big ]. 
\end{split} 
\end{equation*}  
Note that by (\ref{bdd-distance}) and the local $\mathcal{C}^2$ regularity of $W$, 
the quantity $W(x,\nabla u^h(x))$ remains bounded upon
choosing $h$ and $\varepsilon_0$ in (\ref{yhapprox}) small enough. 
The convergence in (i) Theorem \ref{liminf} follows immediately. 

We now prove (\ref{limexact}). 
Using  Taylor's expansion of $W$ in a neighborhood of $Q(x')$, we obtain:  
\begin{equation*}
\begin{split}
\frac{1}{h^2} I^h(u^h) &= \frac{1}{h^2}\int_{\Omega^1_h}
W\left(x, Q(x') + hx_3 A_2^h(x') +   h^2 \frac{x_3^2}{2} D^h(x')\right)~\mbox{d}x  
+ \frac{1}{h^2}\int_{\Omega^1 \setminus \Omega^1_h}
W(x, \nabla u^h(x)) ~\mbox{d}x \\
& = \int_{\Omega^1_h} \left(\frac{1}{2}\nabla^2 W (x', \cdot)_{\mid Q(x')} 
(x_3 A_2^h(x'), x_3A_2^h(x')) +  \mathcal{R}^h(x)\right) ~\mbox{d}x 
+ \frac{O(1)}{h^2} |\Omega \setminus \Omega_h|.
\end{split}
\end{equation*} 
Here the reminder $\mathcal{R}^h$ converges, by (\ref{approxd}), to $0$ pointwise 
almost everywhere, as $h\to 0$.
Therefore, recalling the boundedness of $W(x,\nabla u^h(x))$ 
we deduce by dominated convergence and (\ref{yhapprox}) that the above 
integral converges, as $h\to 0$, to:
\begin{equation*}
\begin{split}
\frac{1}{2}\int_{\Omega^1} x_3^2&\nabla^2 W(x', \cdot)_{\mid Q(x')} (A_2(x'), A_2(x')) 
~\mbox{d}x = \frac{1}{2}\int_{\Omega^1} x_3^2 \mathcal{Q}_3(x') \Big(AQ^{-1}A_2\Big) 
~\mbox{d}x\\
& = \frac{1}{24}\int_{\Omega}  \mathcal{Q}_3(x') \Big(A^{-1}Q^{T}A_2\Big) ~\mbox{d}x' 
= \frac{1}{24}\int_{\Omega}\mathcal{Q}_2\Big(A_{\alpha\beta}^{-1} (\nabla y)^T 
\nabla\vec n\Big)~\mbox{d}x'. 
\end{split}
\end{equation*}
where we applied frame invariance,  (\ref{ddef}) and (\ref{rep}).
\endproof

\section{Conditions for existence of  $W^{2,2}$ isometric immersions 
of Riemannian metrics - a proof of Theorem \ref{isoim}}\label{isometrie}

The assertions  in (i) follow directly from Theorem \ref{liminf}
and Theorem \ref{threcseq}. It remains to prove (ii), which clearly 
implies (iii).

Assume that $\lim_{h\to 0}\frac{1}{h^2}I^h(u^h) = 0$ 
for some sequence of deformations
$u^h\in W^{1,2}(\Omega^h,\mathbb{R}^3)$. Then, by Theorem \ref{liminf}
there exists a metric realization $y\in W^{2,2}(\Omega,\mathbb{R}^3)$
such that:
$$ \int_\Omega \mathcal{Q}_2(x') \Big(A_{\alpha\beta}^{-1} \Pi(x')\Big) 
~\mbox{d}x' = 0,$$
where $\Pi = (\nabla y)^T\nabla \vec n$ is the second fundamental form of
the image surface $y(\Omega)$.
Recalling (\ref{r2}) we obtain:
$$0 = \mathcal{Q}_2(x') \Big(A_{\alpha\beta}^{-1}\Pi\Big) = 
\mathcal{Q}_2(x') \Big(\mathbb{P}_{\mbox{sym}\cdot A_{\alpha\beta}^{-1}} 
(A_{\alpha\beta}^{-1}\Pi)\Big)\qquad \forall x'\in\Omega.$$
Since the quadratic form $\mathcal{Q}_2(x')$ is nondegenerate 
on $\mbox{sym}\cdot A_{\alpha\beta}^{-1}$, it follows that:
\begin{equation}\label{B} 
B A_{\alpha\beta}^{-1} = \mathbb{P}_{\mbox{sym}\cdot A_{\alpha\beta}^{-1}} 
(A_{\alpha\beta}^{-1}\Pi) =0, 
\end{equation}
for the symmetric matrix $B\in\mathbb{R}^{2\times 2}$ satisfying:
$$ (A_{\alpha\beta}^{-1}\Pi - BA_{\alpha\beta}^{-1}): (SA_{\alpha\beta}^{-1})
= 0 \qquad \forall S\in\mbox{sym}. $$
The above condition is equivalent to $A_{\alpha\beta}^{-1}\Pi A_{\alpha\beta}^{-1} 
- BA_{\alpha\beta}^{-2}\in\mbox{skew}$, but $B=0$ in view of (\ref{B}),
so:
$$A_{\alpha\beta}^{-1} \Pi A_{\alpha\beta}^{-1} \in\mbox{skew}.$$
Since $\Pi\in\mbox{sym}$, there must be $\Pi=0$ and therefore indeed effectively
$y:\Omega\longrightarrow \mathbb{R}^2$.

On the other hand, if $y\in W^{2,2}(\Omega, \mathbb{R}^2)$ is a $2$d realization 
of $[g_{\alpha\beta}]$ then clearly $\Pi = (\nabla y)^T\nabla \vec n = 0$,
so for the recovery sequence corresponding to $y$ and constructed in Theorem
\ref{threcseq}, there holds $\lim_{h\to 0}\frac{1}{h^2} I^h(u^h) = I(y) = 0$.
\endproof


\begin{thebibliography}{9999}

\bibitem{ciarbookvol3} P.G. Ciarlet, \textit{Mathematical Elasticity}, 
North-Holland, Amsterdam (2000).

\bibitem{ciarlet_new} P.G. Ciarlet, 
\textit{An Introduction to Differential Geometry with Applications to Elasticity},
Springer, Dordrecht (2005).





\bibitem{dalmaso} G. Dal Maso, 
\textit{An introduction to $\Gamma$-convergence}, 
Progress in Nonlinear Differential Equations and their Applications, {\bf 8}, 
Birkh\"auser, MA (1993).


\bibitem{kupferman} E. Efrati, E. Sharon and R. Kupferman,
\textit{Elastic theory of unconstrained non-Euclidean plates},
J. Mechanics and Physics of Solids, {\bf 57} (2009), 762--775.


\bibitem{FJMgeo} G. Friesecke, R. James and S. M\"uller, 
\textit{A theorem on geometric rigidity and the derivation of nonlinear 
plate theory from three dimensional elasticity}, Comm. Pure. Appl. Math., 
{\bf 55} (2002), 1461--1506.  

\bibitem{FJMhier} G. Friesecke, R. James and S. M\"uller, \textit{A hierarchy 
of plate models derived from nonlinear elasticity by gamma-convergence}, 
Arch. Ration. Mech. Anal.,  {\bf 180}  (2006),  no. 2, 183--236.


\bibitem{gromov} M. Gromov,
\textit{Partial Differential Relations}, Springer-Verlag, Berin-Heidelberg, (1986).

\bibitem{GL} P. Guan and Y. Li, \textit{The Weyl problem with nonnegative 
Gauss curvature}, J. Diff. Geometry, {\bf 39} (1994), 331--342.
 

\bibitem{HH} Q. Han and J.-X. Hong, \textit{Isometric embedding of Riemannian 
manifolds in Euclidean spaces}, Mathematical Surveys and Monographs, {\bf 130} 
American Mathematical Society, Providence, RI (2006).


\bibitem{HZ} J.-X. Hong and  C. Zuily, \textit{Isometric embedding of the 2-sphere with 
nonnegative curvature in $\mathbb R^3$}, Math. Z., {\bf 219} (1995), 323--334. 

\bibitem{Ho} L. H\"ormander, \textit{Uniqueness theorems for second order 
elliptic differential equations}, Comm. Partial Differential Equations, 
{\bf 8} (1983), no. 1, 21--64. 

\bibitem{iaia} J. A. Iaia, \textit{Isometric embeddings of surfaces with nonnegative 
curvature in $\mathbb R^3$}, Duke  Math. J., {\bf 67} (1992), 423--459.

\bibitem{karman} T. von K\'arm\'an, \textit{Festigkeitsprobleme im Maschinenbau}, 
in Encyclop\"adie der Mathematischen Wissenschaften. Vol. IV/4, pp. 311-385, Leipzig,
1910.

\bibitem{klein} Y. Klein, E. Efrati and E. Sharon,
\textit{Shaping of elastic sheets by prescription of Non-Euclidean metrics}, Science, 
{\bf 315} (2007), 1116--1120. 

\bibitem{kuiper} N. H. Kuiper, 
\textit{On $C^1$ isometric embeddings, I, II}, Indag. Math., {\bf 17} (1955), 
545--556, 683--689. 

\bibitem{LR1} H. LeDret and A. Raoult, \textit{The nonlinear membrane model as 
a variational limit of nonlinear three-dimensional elasticity}, 
J. Math. Pures Appl., \textbf{73} (1995), 549--578.

\bibitem{LeD-Rao} H. Le Dret and A. Raoult, 
\textit{The membrane shell model in nonlinear 
elasticity: a variational asymptotic derivation}, J. Nonlinear Sci., {\bf 6} 
(1996), 59--84.
                             
\bibitem{lemopa1} M. Lewicka, M.G. Mora and M.R. Pakzad, \textit{Shell theories 
arising as low energy $\Gamma$-limit of 3d nonlinear elasticity},
to appear in Ann. Scuola Norm. Sup. Pisa Cl. Sci. (2009).

\bibitem{lemopa2} M. Lewicka, M.G. Mora and M.R. Pakzad,
    \textit{A nonlinear theory for shells with slowly varying thickness}, 
C.R. Acad. Sci. Paris, Ser I {\bf 347} (2009), 211--216









\bibitem{Liu50} F.C. Liu, \textit{A Lusin property of Sobolev functions},
Indiana U. Math. J., {\bf 26}  (1977),  645--651.
\bibitem{peterson} P. Petersen,
\textit{Riemannian Geometry}, 2nd edition, Springer (2006).



\bibitem{pogo} A. V. Pogorelov, \textit{An example of a two-dimensional 
Riemannian metric which does not admit a local realization in $\mathbb E^3$}, 
Dokl. Akad. Nauk. SSSR (N.S.) {\bf 198} (1971) 42--43; Soviet Math. Dokl., 
{\bf 12} (1971), 729--730.

\bibitem{reshetnyak} Yu. G. Reshetnyak, \textit{Stability Theorems in Geometry 
and Analysis}, Mathematics and its Applications {\bf 304}, 
Kluwer Academic Publishers Group, Dordrecht (1994).


	

 

\bibitem{Spivak} M. Spivak, \textit{A Comprehensive Introduction to Differential 
Geometry, Vol V}, 2nd edition, Publish or Perish Inc. (1979).



\end{thebibliography}
\end{document}